\DeclareMathOperator{\Tr}{Tr}
\DeclareMathOperator{\bRp}{Re}
\DeclareMathOperator{\diag}{diag}
\DeclareMathOperator{\C}{\mathbb{C}}
\DeclareMathOperator{\R}{\mathbb{R}}
\def\norm#1{ {|\hspace{-.022in}|#1|\hspace{-.022in}|} }
\newcommand{\spn}{\mathrm{span}}
\def\abs#1{\left|#1\right|}
\newcommand{\nc}{\newcommand}
\nc\bR{\mathbb{R}}
\nc\rH{\mathrm{H}}
\nc\N{\mathbb{N}}
 \newcommand{\trans}{^\top}
\def\>{\rangle}
\def\<{\langle}
\def\al{\alpha}
\def\be{\beta}
\def\ga{\gamma}
\nc{\ep}{\epsilon}
\def\eps{\epsilon}
\def\x{\xi}
\newtheorem{theorem}{Theorem}
\newtheorem{lemma}{Lemma}
\newtheorem{corollary}{Corollary}
\newtheorem{proposition}{Proposition}
\begin{document}

\title{Local extrema of entropy functions under tensor products}

\author{Shmuel Friedland}\email{friedlan@uic.edu}
\affiliation{Department of Mathematics, Statistics and Computer Science
University of Illinois at Chicago,
851 S. Morgan Street,
Chicago, IL 60607-7045, USA}
\author{Gilad Gour}\email{gour@ucalgary.ca}
\affiliation{Institute for Quantum Information Science and
Department of Mathematics and Statistics,
University of Calgary, 2500 University Drive NW,
Calgary, Alberta, Canada T2N 1N4}
\author{Aidan Roy}\email{aproy@uwaterloo.ca}
\affiliation{Institute for Quantum Computing,
University of Waterloo, 200 University Avenue West,
Waterloo, Ontario, Canada N2L 3G}

\date{\today}

\begin{abstract}
We show that under a certain condition of local commutativity the minimum von-Neumann entropy output of a quantum channel is locally additive . We also show that local minima of the $2$-norm entropy functions are closed under tensor products if one of the subspaces has dimension $2$.
\end{abstract}

\maketitle

Let $K$ be a subspace of the $m \times n$ complex matrices, and let $x \in K$, $\Tr[xx^*] = 1$. Then the von Neumann entropy of $x$ is
\[
\rH(x) := -\Tr[xx^* \ln xx^*],
\]
and the minimum entropy output of the subspace $K$ is 
\[
\rH(K) := \min_{x\in K, \Tr[xx^*] = 1} \rH(xx^*).
\]
Recently, Hastings \cite{Hastings08} disproved the famous additivity conjecture, which posited that
\begin{equation}
\label{eqn:addconj}
\rH(K_1 \otimes K_2) = \rH(K_1) + \rH(K_2).
\end{equation}
This conjecture was considered one of the most significant open problems in quantum information theory, spawning a large literature \cite{Holevo06}. Its importance was motivated in part by the problem of finding the classical capacity of a quantum channel, and in part by a result of Peter Shor \cite{Shor04} that showed that a number of apparently distinct additivity conjectures, including the additivity of the minimum entropy output of a quantum channel, the additivity of the entanglement of formation, and the additivity of the Holevo capacity, were all equivalent. 

Hastings' counterexample showed that the von Neumann entropy function is not globally additive on subspaces: in other words, if $x_1$ is a global minimum in $K_1$ and $x_2$ is a global minimum in $K_2$, then $x_1 \otimes x_2$ is not necessarily a global minimum in $K_1 \otimes K_2$. On the other hand, in this paper we show that under certain conditions the von Neumann entropy is locally additive. More precisely, we show that if $K_i$ is a subspace with a local minimum $x_i$, and $x_ix_i^*$ commutes with $x_iy_i^*$ for every $y_i \in K_i$, then $x_1 \otimes x_2$ is a local minimum of $K_1 \otimes K_2$; we call this condition the local commutativity condition. More generally, we study the behaviour of entropy functions of the eigenvalues of $xx^*$, and we consider when the tensor product of two local minima is again a local minimum. 

The paper is organized as follows. In Section~\ref{sec:lc} we analyze the local commutativity condition. In Section \ref{sec:firstderiv}, we consider the first derivative of the entropy function and note that critical points of the von Neumann and Renyi entropies are closed under tensor products. These results are due to a group participating in the American Institute for Mathematics workshop on ``Geometry and representation theory"\cite{AIM08}. 
In Section~\ref{sec:shannon}, we consider the second derivative of the von Neumann entropy function, and show that local minima of von Neumann entropy are closed under tensor products, given the previously mentioned commutativity assumption.  Finally, in Section \ref{sec:2norm}, we consider the second derivative of the $2$-norm entropy function. We show that local minima of the $2$-norm are closed under tensor products if one of the subspaces has dimension $2$. 
In the Appendix A we analyze the affine parametrization and use it to derive a necessary condition for local minima. In Appendix B we show that there is a simple counter example for the additivity conjecture over the real numbers.

\section{The local commutativity condition}\label{sec:lc}

For a given function $f:[0,\infty)\to (-\infty,\infty)$ we define $f(x)=\sum_{i=1}^m f(\lambda_i(xx^*))$ for $x\in \mathbb{C}^{m\times n}$, and $\lambda_i$ are the eigenvalues of $xx^*$.
We assume that either $f$ is smooth on $[0,\infty)$, i.e. has two continuous derivatives at every $t\ge 0$, or $f(t)=H(t)\equiv -t\log t$.
Let $D_yf(x), D_y^2 f(y)$ denote the first and the second derivative of $f$ in the $y$ direction:
\[
D_yf(x) = \frac{d}{d\ep}f(x + \ep y)\big|_{\ep = 0}, \quad D_y^2f(x) = \frac{d^2}{d^2\ep}f(x + \ep y)\big|_{\ep = 0}
\]
Then $x$ is a critical point if and only if $D_y f(x) = 0$  for each $y\in K$ (in the next section we will discuss in more details this condition).

Here we focus on the function $f(t)=H(t)\equiv -t\log t$.  In this case we need to be very careful when dealing with
$xx^*$ which have zero eigenvalues.  We will see that for any $x,y\in \C^{m\times n}$, $D_y f(x)\in \R$.  However it is possible that $D_y^2 f=\infty$,
and below we give the exact conditions on $y$ when this happens.
Hence if $x$ is a critical point of the von Neumann entropy, $H(x)$, and $D_y^2 H(x)=\infty$  then $H(x+\epsilon y)> H(x)$ for small enough $\epsilon$.
Thus when we study in the next sections the local minimum of $H(K_1\otimes K_2)$ at the critical point $x_1\otimes x_2$ we need only to consider $y_i$ such that
$D_{y_i}^2 f<\infty$ for $i=1,2$.  This will also give a partial explanation of the local commutativity condition discussed in the introduction.

\begin{lemma}\label{lemdyf=inf}  Let $x,y\in \C^{m\times n}$, $\Tr xx^*>0$ and $H(t)=-t\log t$.  Then  $D_yH(x)\in \R$.
Change standard orthonormal bases in $\C^{m},\C^n$ to new orthonormal bases such that $x,y$ have the forms
\begin{equation}\label{xypartition}
x=\left[\begin{array}{cc} x_{11}&0_{r,n-r}\\0_{m-r,r}&0_{m-r,n-r}\end{array}\right]\;\text{ and  }\;\;
y=\left[\begin{array}{cc} y_{11}&y_{12}\\y_{21}&y_{22}\end{array}\right], 
\end{equation}
with $0_{i,j} \in \C^{i\times j}$ and $x_{11}, y_{11}\in \C^{r\times r}$. Then $D_y^2 f(x) = \infty$ if and only if $y_{22}\ne 0$.
\end{lemma}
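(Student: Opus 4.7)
The plan is to reduce to the block form by a choice of orthonormal bases, then analyze the eigenvalues of $A(\epsilon):=(x+\epsilon y)(x+\epsilon y)^*$ in two groups, ``large'' and ``small'', and examine how $H(t)=-t\log t$ contributes to the first and second derivatives in each group. Since the map $x\mapsto H(xx^*)$ depends only on the eigenvalues of $xx^*$, the derivatives $D_y H(x)$ and $D_y^2 H(x)$ are invariant under simultaneous unitary changes of bases on $\C^m$ and $\C^n$. Applying the singular value decomposition of $x$, I can therefore choose bases in which $x$ takes the stated block form with $x_{11}$ an $r\times r$ invertible matrix ($r=\mathrm{rank}\,x$), so that $P:=x_{11}x_{11}^*$ is strictly positive definite. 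In these bases $y$ splits into the four blocks $y_{11},y_{12},y_{21},y_{22}$.

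Next, I expand
\[
A(\epsilon)=xx^*+\epsilon(xy^*+yx^*)+\epsilon^2 yy^*
\]
blockwise and observe the crucial fact that the $(2,2)$ block is exactly $\epsilon^2(y_{21}y_{21}^*+y_{22}y_{22}^*)$: it has no $O(1)$ or $O(\epsilon)$ contribution. I will then apply a Schur-complement argument to the block matrix $A(\epsilon)$, using invertibility of its $(1,1)$ block $P+O(\epsilon)$. The effective $(2,2)$ block governing the small eigenvalues becomes
\[
\epsilon^2(y_{21}y_{21}^*+y_{22}y_{22}^*)-\epsilon^2 y_{21}x_{11}^*(x_{11}x_{11}^*)^{-1}x_{11}y_{21}^* + O(\epsilon^3),
\]
and the second term simplifies to exactly $\epsilon^2 y_{21}y_{21}^*$, leaving $\epsilon^2 y_{22}y_{22}^* + O(\epsilon^3)$. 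Consequently the $m-r$ small eigenvalues of $A(\epsilon)$ are $\epsilon^2\mu_j+O(\epsilon^3)$, where $\mu_j$ are the eigenvalues of $y_{22}y_{22}^*$, while the $r$ large eigenvalues remain analytic in $\epsilon$ near the positive eigenvalues of $P$.

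Finally, I plug these trajectories into $H$. Along the large eigenvalues the function $H$ is $C^\infty$, so the contribution to both $D_yH(x)$ and $D_y^2H(x)$ is finite. Along a small trajectory $\lambda(\epsilon)=\epsilon^2\mu+O(\epsilon^3)$ one computes
\[
H(\lambda(\epsilon)) = -2\epsilon^2\mu\log|\epsilon|\;-\;\epsilon^2\mu\log\mu\;+\;O(\epsilon^3\log|\epsilon|),
\]
whose first $\epsilon$-derivative at $0$ is $0$ (because $\epsilon\log|\epsilon|\to 0$), but whose second $\epsilon$-derivative is dominated by $-4\mu\log|\epsilon|\to+\infty$ when $\mu>0$. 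Summing over eigenvalues, $D_yH(x)$ is always finite so $D_yH(x)\in\R$, and $D_y^2H(x)=+\infty$ iff at least one $\mu_j$ is positive, i.e.\ iff $y_{22}\ne 0$; when $y_{22}=0$ the small eigenvalues are $o(\epsilon^2)$ and produce no $\log|\epsilon|$ singularity at second order.

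The main obstacle is the rigorous justification of the leading-order asymptotics for the small eigenvalues, including the precise cancellation of the $\epsilon^2 y_{21}y_{21}^*$ term and the treatment of possible degeneracies in the spectrum of $y_{22}y_{22}^*$. This can be handled either by working with the Schur complement $A_{22}(\epsilon)-A_{21}(\epsilon)A_{11}(\epsilon)^{-1}A_{12}(\epsilon)$ directly, or by analyzing the reduced characteristic polynomial $\det(\epsilon^{-2}A(\epsilon)-tI)$ restricted to the invariant subspace associated with the small eigenvalues; in either case Hoffman–Wielandt type bounds give uniform control over the eigenvalue branches and let the divergent $\log|\epsilon|$ contribution be extracted cleanly.
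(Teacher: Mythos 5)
Your proof is correct, and it takes a genuinely different route from the paper's for the key step. Both proofs use the SVD to reduce to the block form \eqref{xypartition} and both observe that the $m-r$ small eigenvalues of $(x+\ep y)(x+\ep y)^*$ are $O(\ep^2)$, so that $D_y^2 H(x)=\infty$ iff the $\ep^2$-order coefficient of at least one small eigenvalue is strictly positive. Where you differ is in how that coefficient is controlled. The paper works with the $(r+1)$-st compound matrix $\wedge^{r+1}(x+\ep y)$ and the identity $(\sigma_1(\wedge^{r+1}(x+\ep y)))^2=\prod_{i=1}^{r+1}\lambda_i$, which yields a qualitative criterion ($d_{r+1}''>0$ iff $y_{22}\neq 0$) without identifying the coefficients. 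You instead take the Schur complement of the $(1,1)$ block and show, using the clean cancellation $y_{21}x_{11}^*(x_{11}x_{11}^*)^{-1}x_{11}y_{21}^*=y_{21}y_{21}^*$, that the effective operator for the small eigenvalues is $\ep^2 y_{22}y_{22}^*+O(\ep^3)$; this is sharper, since it identifies the $\ep^2$-coefficients of the small eigenvalue branches as exactly the eigenvalues of $y_{22}y_{22}^*$, and it makes the equivalence with $y_{22}\neq 0$ immediate. Your approach also does not require diagonalizing $x_{11}$ — invertibility suffices — whereas the compound-matrix argument uses diagonality of $D=x_{11}$ to isolate the relevant $(r+1)\times(r+1)$ minors. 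The one point you flag as needing care (uniform control over eigenvalue branches, including degeneracies of $y_{22}y_{22}^*$) is handled in the paper by Rellich's theorem on analytic Hermitian families; invoking that here closes the gap cleanly, since the rescaled characteristic equation $\det(y_{22}y_{22}^*-\mu I+O(\ep))=0$ then determines the analytic branches $\mu(\ep)$ with $\mu(0)=\mu_j$ counting multiplicity.
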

\proof By considering $UKV$, where $U,V$ unitary we may assume that $x,y$ in the form \eqref{xypartition}.  Furthermore $x_{11}=D\equiv\diag(d_1,\ldots,d_r) $, where $d_1\ge d_2\ge \ldots\ge d_r>0$ and $r$ is the rank of $x$.   So $d_i$ is the $i$-th singular value, $\sigma_i(x_{11})$ for $i=1,\ldots,r$.
Observe next that
\begin{equation}\label{tracexepy}
\Tr ((x+\ep y)(x+\ep y)^*)=\sum_{i=1}^m \lambda_i ((x+\ep y)(x+\ep y)^*).
 \end{equation}
We assume here that the eigenvalues of a hermitian matrix are arranged in a nonincreasing order. Note that
 \[(x+\ep y)(x+\ep y)^*=xx^*+\ep(xy^*+yx^*) +\ep^2(yy^*)\]
 Hence
 \[\lambda_i((x+\ep y)(x+\ep y)^*)=\lambda_i(xx^*+\ep(xy^*+yx^*)) +O(\ep^2).\]
 Observe next that
 \[xx^*+\ep(xy^*+yx^*)=\left[\begin{array}{cc} D+\ep(Dy_{11}^*+y_{11}D)&\ep Dy_{21}^*\\\ep y_{21}D&0\end{array}\right].\]
For small $\ep$, the first variation formula (see \cite{Kato}) yields
 \begin{align*}
& \lambda_i((x+\ep y)(x+\ep y)^*)= d_i+d_i'\ep   + O(\ep^2) \textrm{ for }i=1,\ldots, r, \\
 & \lambda_i((x+\ep y)(x+\ep y)^*)= O(\ep^2) \textrm{ for } i>r.
 \end{align*}
 Hence $\lambda_i((x+\ep y)(x+\ep y)^*)=d_i''\ep^2 +O(\ep^3)$ for $i > r$, with $d_{r+1}''\ge\ldots\ge d_m''\ge 0$. These calculations show that
 \[
 H(x+\ep y)=H(D+\ep y_{11})-\sum_{i=r+1}^m d_i''\ep^2\log(d_i''\ep^2) +O(\ep^2).
\]
 Hence $D_yH(x)\in\R$ and $D_y^2H(x)=\infty$ if and only if $d_{r+1}''>0$.
 It is left to show that $d_{r+1}''>0$ if and only if $y_{22}\ne 0$.
 Consider $\wedge^{r+1} (x+\ep y)$. the $r+1$ compound matrix of $x+\ep y$. 
(Recall that $\wedge^{r+1} (x+\ep y)$ is the ${m\choose r+1}\times {n\choose r+1}$ matrix whose entries are $(r+1)\times (r+1)$
 minors of $x+\ep y$.)  Note that $\wedge^{r+1} (x+\ep y)$ is a polynomial matrix in $\ep$.  Since $x$ has rank $r$ it follows that
 $\wedge^{r+1}(x)=0$.  Hence $\wedge^{r+1} (x+\ep y)=\ep z_1 +\ep^2 z_2(\ep)$,  where $z_1$ is a constant matrix and $z_2(\ep)$ is a polynomial
 matrix in $\ep$.  We claim that $z_1=0$ if and only if $y_{22}=0$.  Indeed since $D$ is diagonal then a minor of order $r+1$ that can have a nonzero derivative at $\ep=0$ is the minor based on the rows $\alpha=\{1,\ldots,r,p\}$ and columns $\beta=\{1,\ldots,r,q\}$.
 Denote this minor by $\det (x+\ep y)[\alpha,\beta]$.  Clearly  $\det (x+\ep y)[\alpha,\beta]=\ep (d_1\ldots d_r y_{p,q})+O(\ep^2)$, where $y_{p,q}$
 is the $p,q$ entry of $y$.  So if $y_{22}=0$ we obtain that $z_1=0$.  Hence $\|\wedge^{r+1}(x+\ep y)\|_2=\sigma_1(\wedge^{r+1}(x+\ep y))\le \ep^2 a$ for some positive $a$. Recall that 
 \[(\sigma_1(\wedge^{r+1}(\x+\ep y)))^2= \prod_{i=1}^{r+1} \lambda_i((x+\ep y)(x+\ep y)^*).\]
 As $(\sigma_1(\wedge^{r+1}(\x+\ep y) ))^2\le a^2 \ep^4$, we deduce that $d_{r+1}''=0$.

 It is left to show that if $y_{p,q}\ne 0$ for some $p,q>r$, then $d_{r+1}''>0$.  Clearly,
 \[\|\wedge^{r+1}(x+\ep y)\|_2\ge |\det(x+\ep y)[\alpha,\beta]|\ge d_1\ldots d_r|y_{p,q}| \frac{|\ep|}{2}\]
 for some small value of $\ep$.  (The first inequality follows from the fact the $\ell_2$ norm of a matrix is not less than the absolute value
 of any of its entries.)  This shows that $d_{r+1}''>0$.  \qed

The lemma above implies that for the purpose of calculating local minima, without loss of generality, we can always take the directional derivatives in a direction with $y_{22}=0$. In the lemma above, however, we did not impose the normalization condition
$\text{Tr}(xx^*)=1$. As we show in the next lemma, it does not affect the result that $D^2_y H=\infty$ if and only if $y_{22}=0$. 

 \begin{lemma}\label{varxyepsilon}
 Let $x,y\in \C^{m\times n}$, with $\Tr(xx^*)=1$ and $y\ne 0$. Consider the matrix 
 $$
 x(y,\ep):=\frac{1}{\sqrt{\Tr((x+\ep y)(x+\ep y)^*)}} (x+\ep y)\;,
 $$
 which is always defined for small $|\ep|$.   Then $\frac{d}{d\ep} H(x(y,\ep))|_{\ep=0}\in \R$, and  
 $\frac{d^2}{d\ep^2} H(x(y,\ep))|_{\ep=0}=\infty$  if and only if $D^2_y(f)=\infty$.
 \end{lemma}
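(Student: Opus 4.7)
The idea is to reduce the normalized problem to the unnormalized one already handled in Lemma~\ref{lemdyf=inf} by exploiting the fact that $H$ is almost homogeneous of degree~$2$. For any $\lambda>0$, a direct computation gives
\[
H(\lambda x) = -\Tr[\lambda^2 xx^*(\log\lambda^2 + \log xx^*)] = \lambda^2 H(x) - \lambda^2\log(\lambda^2)\,\Tr(xx^*).
\]
Let $t(\ep):=\Tr((x+\ep y)(x+\ep y)^*)$, which is a quadratic polynomial in $\ep$ with $t(0)=1$. Applying the identity to $x+\ep y$ with $\lambda^2=1/t(\ep)$ collapses everything into the clean formula
\[
H(x(y,\ep)) \;=\; \log t(\ep) \;+\; \frac{H(x+\ep y)}{t(\ep)}.
\]

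Since $t(0)=1>0$, both $\log t(\ep)$ and $1/t(\ep)$ are real-analytic in a neighborhood of $\ep=0$; hence the normalization contributes only smooth terms, and the entire singular behavior of $H(x(y,\ep))$ is inherited from $H(x+\ep y)$. Substituting the expansion derived in the proof of Lemma~\ref{lemdyf=inf},
\[
H(x+\ep y) = H(D+\ep y_{11}) - \sum_{i=r+1}^m d_i''\,\ep^2\log(d_i''\ep^2) + O(\ep^2),
\]
in which $H(D+\ep y_{11})$ is analytic at $\ep=0$ (the $d_i$ are bounded away from~$0$), and using $1/t(\ep)=1+O(\ep)$, the coefficient of the leading singular contribution $\ep^2\log\ep^2$ is preserved. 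Thus
\[
H(x(y,\ep)) \;=\; A(\ep) \;-\; \Big(\sum_{i=r+1}^m d_i''\Big)\,\ep^2\log\ep^2 \;+\; o(\ep^2),
\]
with $A$ real-analytic near~$0$.

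From this expansion the conclusion is immediate. The first derivative at $\ep=0$ equals $A'(0)\in\R$, because $\frac{d}{d\ep}[\ep^2\log\ep^2]$ vanishes at $\ep=0$; in particular it is real and finite. The second derivative is $+\infty$ iff the coefficient $\sum_{i>r}d_i''>0$, which by Lemma~\ref{lemdyf=inf} is equivalent to $y_{22}\ne 0$, and hence to $D_y^2 H(x)=\infty$. The creative step is the scalar identity in the opening display; the only technical care required is to verify that multiplication of the Lemma~\ref{lemdyf=inf} expansion by the analytic factor $1/t(\ep)$ neither cancels nor shifts the coefficient of the logarithmic singularity, which is transparent because $1/t(\ep)$ equals $1$ at $\ep=0$ and the resulting corrections are of order $\ep^3\log\ep^2=o(\ep^2)$.
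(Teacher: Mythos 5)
Your proof is correct and follows essentially the same route as the paper: the scalar identity $H(x(y,\ep))=\log t(\ep)+H(x+\ep y)/t(\ep)$ is precisely the paper's Eq.~\eqref{hxyepexpr} (with $h_1=1/t$, $h_2=\log t$), and the conclusion is drawn in the same way from the analyticity of $t$, $1/t$, $\log t$ near $\ep=0$ together with $t(0)=1$. The only difference is presentational: you substitute the explicit singular expansion from Lemma~\ref{lemdyf=inf} before drawing the conclusion, whereas the paper simply observes that the second derivative is $D_y^2 f$ plus finite terms.
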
 
 \proof  
  The functions $h_1(\ep):=(\Tr((x+\ep y)(x+\ep y)^*))^{-1}$ and $h_2(\ep):=\log \Tr((x+\ep y)(x+\ep y)^*)$ are analytic in the neighborhood
 of $\ep=0$, and clearly
 \begin{equation}\label{hxyepexpr}
 H(x(y,\ep))=h_1(\ep)f(\ep)+h_2(\ep).
 \end{equation}
 As $h_1(0)=1$ we obtain
 \[
 \frac{d}{d\ep} H(x(y,\ep))|_{\ep=0}=D_y(f)+h_1'(0)H(x)+h_2'(0)\in\R,
 \]
 while $\frac{d^2}{d\ep^2} H(x(y,\ep))|_{\ep=0}$ consists of $D_y^2(f)$ plus finite terms. The lemma follows.  \qed
 
 The two lemmas above imply the following characterization of the local commutative condition discussed in the introduction.

 \begin{lemma}\label{loccomcon}  Let the assumptions of Lemma \ref{lemdyf=inf}  hold.  Assume that $x,y$ are in the form \eqref{xypartition}.
 Then $xx^*$ commutes with $xy^*$ if and only if $y_{21}=0$ and $x_{11}x_{11}^*$ commutes with $x_{11}y_{11}^*$ (which is equivalent to $x_{11}^*x_{11}y_{11}^*=y_{11}^*x_{11}x_{11}^*$.) 
 \end{lemma}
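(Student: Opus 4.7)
The plan is a direct block-matrix computation of the commutator $[xx^*, xy^*]$ using the partition \eqref{xypartition}. Since the only nonzero block of $x$ is $x_{11}$, one reads off immediately
\[
xx^* = \begin{pmatrix} x_{11}x_{11}^* & 0 \\ 0 & 0 \end{pmatrix}, \qquad xy^* = \begin{pmatrix} x_{11}y_{11}^* & x_{11}y_{21}^* \\ 0 & 0 \end{pmatrix},
\]
so $y_{12}$ and $y_{22}$ do not enter the calculation at all. I would then multiply $xx^*$ and $xy^*$ in both orders and subtract. The bottom row of each product is automatically zero, so only the top row of the commutator needs to be examined.

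Carrying out the block arithmetic, the $(1,2)$ block of $[xx^*, xy^*]$ equals $x_{11}x_{11}^*x_{11}y_{21}^*$, and the $(1,1)$ block equals
\[
x_{11}x_{11}^*x_{11}y_{11}^* - x_{11}y_{11}^*x_{11}x_{11}^* \;=\; x_{11}\bigl(x_{11}^*x_{11}y_{11}^* - y_{11}^*x_{11}x_{11}^*\bigr).
\]
The crucial ingredient is that $x_{11}$ is invertible: it is $r\times r$ where $r$ is the rank of $x$, so $x_{11}x_{11}^*x_{11}$ is invertible as well. Hence the vanishing of the $(1,2)$ block is equivalent to $y_{21}^* = 0$, i.e.\ $y_{21} = 0$, and the vanishing of the $(1,1)$ block is equivalent to $x_{11}^*x_{11}y_{11}^* = y_{11}^*x_{11}x_{11}^*$.

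To close the argument I would observe that the same grouping yields
\[
[x_{11}x_{11}^*,\,x_{11}y_{11}^*] \;=\; x_{11}\bigl(x_{11}^*x_{11}y_{11}^* - y_{11}^*x_{11}x_{11}^*\bigr),
\]
so by invertibility of $x_{11}$ the commutativity $[x_{11}x_{11}^*,\,x_{11}y_{11}^*]=0$ is literally the identity $x_{11}^*x_{11}y_{11}^* = y_{11}^*x_{11}x_{11}^*$ extracted above. Both halves of the stated equivalence then line up. There is no genuine obstacle here; the only subtlety is invoking the invertibility of $x_{11}$ at the right moment, after which the rest is routine block bookkeeping.
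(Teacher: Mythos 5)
Your block computation is correct, and it is essentially the same argument as the paper's: compute the commutator $[xx^*,xy^*]$ block by block, use invertibility of $x_{11}$ to kill the off-diagonal block and reduce the $(1,1)$ block to $x_{11}x_{11}^*x_{11}y_{11}^*=x_{11}y_{11}^*x_{11}x_{11}^*$, then cancel $x_{11}$ on the left to get the stated equivalent condition. You simply write out the bookkeeping that the paper leaves implicit.
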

 \proof Write $x$ and $y$ as in \eqref{xypartition}.  The assumption that $x_{11}$ is invertible, and $xx^*$ commutes with $xy^*$ 
 is equivalent to $y_{21}=0$ and $x_{11}x_{11}^*$ commutes with $x_{11}y_{11}^*$.  So $x_{11}x_{11}^* x_{11}y_{11}^* =x_{11}y_{11}^* x_{11}x_{11}^*$.
 Divide both sides of this equalities  by $x_{11}$ to obtain the lemma.   \qed

In particular, the above lemma together with the theorem in Section~\ref{sec:shannon} imply that local additivity holds 
for subspaces consisting of matrices $y$ as in Eq.~(\ref{xypartition}), with 
$y_{11}$ diagonal, $y_{21}=0$, and $y_{12}$ arbitrary.

\section{First derivative of entropy functions under tensor products}\label{sec:firstderiv}

All of the results in this section are due to the ``Quantum Information Group" participating in the workshop ``Geometry and representation theory", held at the American Institute for Mathematics~\cite{AIM08}; we record the results here for completeness.

For a given function $f(t)$ as defined above, let $D_yf(x)$ denote the derivative of $f$ in the $y$ direction: 
\[
D_yf(x) = \frac{d}{d\ep}f(x + \ep y)\big|_{\ep = 0}.
\]
Then $x$ is a critical point if and only if $D_y f(x) = 0$ for every $y$. Since we are interested in local minima in $K$ subject to $\Tr[xx^*] = 1$, we restrict $y$ to the tangent space $\{y \in K: D_y\Tr[xx^*] = 0\} = \{y \in K: \Tr[xy^* + yx^*] = 0\}$. Also, we restrict our attention to functions $f(x)$ which depend only on $xx^*$. Since $xx^*$ is invariant under $x \mapsto ix$, we may ignore $y = ix$. That is, $x \in K$ is critical if and only if $D_yf(x) = 0$ for every $y$ in the orthogonal subspace
\[
x^\perp := \{y \in K : \Tr[xy^*] = 0\}.
\]

Under tensor products, the orthogonal subspace has the following decomposition:
\[
(x_1 \otimes x_2)^\perp \;= \;\langle x_1 \rangle \otimes x_2^\perp \;\; \oplus \;\; x_1^\perp \otimes \langle x_2 \rangle \;\; \oplus \;\; x_1^\perp \otimes x_2^\perp.
\]

For a function $f(x)$ depending only on $xx^*$, a point $x \in K$ is critical in $K$ if and only if $D_yf(x) = 0$ for every $y \in x^\perp$. In general, given a univariate differentiable function $F$, a Taylor series expansion of $F$ shows that the matrix function $a \mapsto \Tr[F(a)]$ has directional derivative
\[
\frac{d}{d\ep}\Tr[F(a + \ep b)]\big|_{\ep = 0} = \Tr[F'(a)b].
\]
We are interested in the case $a = xx^*$ and $b = xy^*+yx^*$: if $f(x) = \Tr[F(xx^*)]$, then
\[
D_yf(x) = \Tr[F'(xx^*)(xy^* + yx^*)].
\]
This derivative is zero for all $y \in x^\perp$ if and only if $\Tr[F'(xx^*)xy^*] = 0$ for all $y \in x^\perp$.

\begin{theorem}
\label{thm:firstderiv}
Let $F$ be a differentiable univariate function such that $F'(a_1 \otimes a_2)$ is in the span of 
\[
\{F'(a_1)\otimes F'(a_2),F'(a_1)\otimes I,I \otimes F'(a_2),I \otimes I\}.
\]
If $x_1$ and $x_2$ are critical points of $f(x) = \Tr[F(xx^*)]$ subject to $\Tr[xx^*] = 1$, then so is $x_1 \otimes x_2$. 
\end{theorem}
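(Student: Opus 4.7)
The plan is to invoke the characterization of critical points derived immediately before the statement: $x$ is critical for $f(x)=\Tr[F(xx^*)]$ under $\Tr[xx^*]=1$ if and only if $\Tr[F'(xx^*)xy^*]=0$ for every $y\in x^\perp$. So what I need to check is that $\Tr[F'(a_1\otimes a_2)(x_1\otimes x_2)y^*]=0$ for every $y\in(x_1\otimes x_2)^\perp$, where $a_i:=x_ix_i^*$. Using the decomposition
\[
(x_1\otimes x_2)^\perp=\langle x_1\rangle\otimes x_2^\perp \;\oplus\; x_1^\perp\otimes\langle x_2\rangle \;\oplus\; x_1^\perp\otimes x_2^\perp
\]
already recorded in the text, it suffices to verify the identity separately on elements of the form $y=x_1\otimes y_2$, $y=y_1\otimes x_2$, and $y=y_1\otimes y_2$, with $y_i\in x_i^\perp$ in each.

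I would then expand, using the spanning hypothesis,
\[
F'(a_1\otimes a_2)=\alpha\,F'(a_1)\otimes F'(a_2)+\beta\,F'(a_1)\otimes I+\gamma\,I\otimes F'(a_2)+\delta\,I\otimes I
\]
for some scalars $\alpha,\beta,\gamma,\delta$ depending on $a_1,a_2$. Since $(x_1\otimes x_2)(y_1\otimes y_2)^*=x_1y_1^*\otimes x_2y_2^*$ and the trace factorizes across tensor products, the inner product $\Tr[F'(a_1\otimes a_2)(x_1y_1^*\otimes x_2y_2^*)]$ becomes a sum of four terms, each a product $T_1(y_1)\cdot T_2(y_2)$ in which $T_i(y_i)$ is either $\Tr[F'(a_i)x_iy_i^*]$ or $\Tr[x_iy_i^*]$. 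The criticality of $x_i$ makes every factor $\Tr[F'(a_i)x_iy_i^*]$ vanish when $y_i\in x_i^\perp$, while the defining relation $\Tr[x_iy_i^*]=0$ for $y_i\in x_i^\perp$ kills the factors without $F'$.

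Putting these together, on the $x_1^\perp\otimes x_2^\perp$ summand both factors in every one of the four terms are annihilated; on each mixed summand the ``free'' side produces generally nonzero factors ($\Tr[F'(a_1)x_1x_1^*]$ or $\Tr[x_1x_1^*]=1$), but the constrained side still kills every term by one of the two mechanisms above. Summing gives the desired criticality of $x_1\otimes x_2$. I do not foresee a substantive obstacle: the argument is purely algebraic, and the only care required is to match the correct killer (criticality versus orthogonality) against each of the four terms in the expansion of $F'(a_1\otimes a_2)$, which is precisely what makes \emph{all four} spanning vectors $F'(a_1)\otimes F'(a_2)$, $F'(a_1)\otimes I$, $I\otimes F'(a_2)$, $I\otimes I$ admissible in the hypothesis.
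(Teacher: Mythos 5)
Your proposal is correct and follows essentially the same route as the paper: both use the characterization that criticality means $\Tr[F'(xx^*)xy^*]=0$ on $x^\perp$, expand $F'(a_1\otimes a_2)$ in the four spanning terms, factorize the trace across the tensor product, and then kill each factor by either the criticality of $x_i$ or the orthogonality $\Tr[x_iy_i^*]=0$, using the recorded decomposition of $(x_1\otimes x_2)^\perp$ together with linearity to reduce to pure tensors. The only cosmetic difference is that the paper first observes the vanishing for arbitrary pure tensors with one factor orthogonal and then invokes the decomposition, whereas you organize around the three summands from the start.
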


\begin{proof}
Let $x = x_1 \otimes x_2$. It suffices to show that if $D_{y_i}f(x_i) = 0$ for all $y_i \in x_i^\perp$, then $D_{y}f(x) = 0$ for all $y \in x^\perp$. That is, if $\Tr[F'(x_ix_i^*)x_iy_i^*] = 0$, then $\Tr[F'(xx^*)xy^*] = 0$. 

First, suppose $y = y_1 \otimes y_2$, for some arbitrary $y_1$ and $y_2$, and consider the term in $F(xx^*)$ proportional to $F'(x_1x_1^*)\otimes F'(x_2x_2^*)$: we have
\begin{align*}
\Tr[\left(F'(x_1x_1^*)\otimes F'(x_2x_2^*)\right)(xy^*)] \qquad \qquad\\
 = \Tr[F'(x_1x_1^*)x_1y_1^*]\Tr[F'(x_2x_2^*)x_2y_2^*],
\end{align*}
which is $0$ provided that either $y_1 \in x_1^\perp$ or $y_2 \in x_2^\perp$ (or both). Likewise, for the term proportional to $F'(x_1x_1^*)\otimes I$, 
\[
\Tr[\left(F'(x_1x_1^*)\otimes I\right)(xy^*)] = \Tr[F'(x_1x_1^*)x_1y_1^*]\Tr[x_2y_2^*],
\]
which again is $0$ if either $y_1 \in x_1^\perp$ or $y_2 \in x_2^\perp$. Similarly, $\Tr[\left(I\otimes F'(x_2x_2^*)\right)(xy^*)] = 0$ and $\Tr[\left(I\otimes I\right)(xy^*)] = 0$. Combining the terms which make up $F'(xx^*)$, we see that $\Tr[F'(xx^*)(xy^*)] = 0$ whenever $y = y_1 \otimes y_2$ satisfies $y_1 \in x_1^\perp$ or $y_2 \in x_2^\perp$.

Now an arbitrary element $y \in x^\perp$ can be written as a linear combination of terms of the form $x_1 \otimes y_2$, $y_1 \otimes x_2$, and $y_1 \otimes y_2$, with $y_i \in x_i^\perp$. For each of these terms either the first or second component of the tensor product is in $x_i^\perp$. Therefore $\Tr[F'(xx^*)xy^*] = 0$ for all $y \in x^\perp$.
\end{proof}

Our main interest is in the function $x \mapsto -\Tr[xx^* \ln xx^*]$, which is proportional to the usual von Neumann entropy of the matrix $xx^*$. Letting $F(t) = -t \ln t$, so that $F'(t) = -(1 + \ln t)$, we have
\begin{align*}
F'(a_1 \otimes a_2) & = -I - \ln(a_1 \otimes a_2) \\
& = -I \otimes I - \ln(a_1) \otimes I - I \otimes \ln(a_2) \\
& \in \spn\left\{I \otimes I,F'(a_1)\otimes I, I \otimes F'(a_2)\right\}.
\end{align*}
(Here we used the fact that $\ln(a_1 \otimes a_2) = \ln(a_1) \otimes I + I \otimes \ln(a_2)$.) Thus the hypotheses of Theorem \ref{thm:firstderiv} are satisfied, and so critical points of $x \mapsto -\Tr[xx^* \ln xx^*]$ are closed under tensor products.
 
Another important class of entropy functions are the $p$-norms:
\[
x \mapsto \norm{xx^*}_p^p = \Tr[(xx^*)^p].
\]
Letting $F(t) = t^p$, so $F'(t) = pt^{p-1}$, we have 
\[
F'(a_1 \otimes a_2) = p(a_1 \otimes a_2)^{p-1} = \frac{1}{p}F'(a_1) \otimes F'(a_2).
\] 
Again $F(t)$ is in the form of Theorem \ref{thm:firstderiv}. Thus for both the von Neumann entropy and the $p$-norms, the tensor product of critical points (subject to $\Tr[xx^*] = 1$) are again critical points.

\section{Second derivative of the von-Neumann entropy}\label{sec:shannon}

In this section we show that under the local commutativity condition, if $x_1 \in K_1$ and $x_2 \in K_2$ are nonsingular strong local minima of 
\[
x \mapsto -\Tr[xx^* \log xx^*]
\]
subject to $\Tr[xx^*] = 1$, then $x_1 \otimes x_2$ is also a strong local minimum in $K_1 \otimes K_2$. More precisely, we assume that if $y_i \in K_i$ is orthogonal to $x_i$, then $x_ix_i^*$ and $x_iy_i^*$ commute. Throughout this section we will also assume without loss of generality that $\Tr[yy^*] = 1$.

In this section we work with the normalized entropy function
\[
\rH(x) := -\Tr\bigg[\frac{xx^*}{\norm{x}^2} \log \frac{xx^*}{\norm{x}^2}\bigg].
\]
A point $x$ is a strong local minimum of $\rH$ on $\{x : \Tr[xx^*] = 1\}$ if and only if for every $y$ orthogonal to $x$, the second directional derivative $D_y^2\rH(x)$ is positive.

\begin{lemma} 
\label{lem:secondderiv}
Assume $xx^*$ and $xy^*$ commute. Then 
\begin{eqnarray*}
D^{2}_{y}\rH(x)=2{\rm Tr}\left[xx^* \log xx^*\right]-2{\rm Tr}\left[yy^*\log xx^*\right] \\
-{\rm Tr}[(xy^* + yx^*)^2(xx^*)^{-1}]\;,
\end{eqnarray*}
where the last trace is taken over the support of $xx^*$.
\end{lemma}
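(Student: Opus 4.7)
The plan is to compute $D_y^2\rH(x)$ directly by Taylor expanding $\rho(\ep):=(x+\ep y)(x+\ep y)^*$ in $\ep$ and exploiting the commutativity hypothesis. Set $\rho_0=xx^*$, $A=xy^*+yx^*$, and $B=yy^*$, so that
\[
\rho(\ep)=\rho_0+\ep A+\ep^2 B.
\]
Taking the adjoint of $[xx^*,xy^*]=0$ gives $[xx^*,yx^*]=0$, hence $[\rho_0,A]=0$. Because $y$ is orthogonal to $x$ (so $\Tr[xy^*]=0$) and $\Tr[yy^*]=1$, the normalization factor is $\tau(\ep):=\Tr[\rho(\ep)]=1+\ep^2$.

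Next I would peel off the normalization. Using $\log(\rho/\tau)=\log\rho-(\log\tau)I$ together with $\Tr[\rho/\tau]=1$,
\[
\rH(x+\ep y)=\frac{S(\ep)}{\tau(\ep)}+\log\tau(\ep),\qquad S(\ep):=-\Tr[\rho(\ep)\log\rho(\ep)].
\]
From $\tau(0)=1$, $\tau'(0)=0$, $\tau''(0)=2$, a direct application of the quotient rule at $\ep=0$ yields
\[
\rH''(0)=S''(0)-2S(0)+2.
\]
Since $S(0)=-\Tr[xx^*\log xx^*]$, only $S''(0)$ remains to be computed.

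For $S''(0)$ I would use commutativity to simultaneously diagonalize $\rho_0$ and $A$, writing $\rho_0=\diag(\lambda_i)$ and $A=\diag(a_i)$ in one orthonormal basis, with $x$ nonsingular so that all $\lambda_i>0$. Because $A$ is \emph{diagonal} in the eigenbasis of $\rho_0$, the off-diagonal Rayleigh--Schr\"odinger contribution vanishes at second order and the eigenvalues of $\rho(\ep)$ expand as
\[
\mu_i(\ep)=\lambda_i+\ep a_i+\ep^2 B_{ii}+O(\ep^3),
\]
where $B_{ii}$ is a diagonal entry of $B$ in this basis. Taylor expanding $g(t)=-t\log t$ about each $\lambda_i$ to second order, summing $g(\mu_i)$ over $i$, and recognizing the resulting sums as traces gives
\[
S''(0)=-2\Tr[B\log\rho_0]-\Tr[A^2\rho_0^{-1}]-2\Tr[B].
\]
Substituting into $\rH''(0)=S''(0)-2S(0)+2$ and using $\Tr[B]=1$ to cancel the stray constants produces exactly the formula in the statement.

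The main obstacle is the second-order eigenvalue expansion. Without the commutativity assumption, one would pick up the off-diagonal term $\sum_{j\ne i}|A_{ij}|^2/(\lambda_i-\lambda_j)$, which does not reassemble cleanly into $\Tr[A^2\rho_0^{-1}]$. The hypothesis $[xx^*,xy^*]=0$ is precisely what forces $A$ to be diagonal in the eigenbasis of $\rho_0$, eliminating that term and allowing the entropy expansion to collapse into the three traces in the lemma. The transition from the unnormalized $\rho$ to $\rho/\tau$ is elementary calculus, and for singular $xx^*$ the same argument applies after restricting to its support, accounting for the qualifier in the statement.
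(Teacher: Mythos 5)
Your proof is correct and arrives at the same formula, but by a genuinely different computational route than the paper. The paper expands the matrix logarithm directly as the power series $\log(a+\varepsilon b)=-\sum_{n\ge 1}\frac{(I-a-\varepsilon b)^n}{n}$, differentiates the trace term by term, and then resums the resulting geometric series using commutativity (e.g.\ $\sum_n\Tr[b^2(I-a)^{n-1}]=\Tr[b^2a^{-1}]$). You instead peel off the normalization with the quotient rule to get $\rH''(0)=S''(0)-2S(0)+2$, simultaneously diagonalize $\rho_0$ and $A$ (valid since they are commuting Hermitian matrices), and reduce the matrix problem to a scalar Taylor expansion of $g(t)=-t\log t$ about each eigenvalue, with second-order perturbation theory supplying the eigenvalue corrections. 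Both proofs use commutativity in the same essential way — to diagonalize the first-order perturbation alongside $\rho_0$ — but yours shifts the work from manipulating matrix power series to elementary eigenvalue perturbation theory, which makes the role of the commutativity hypothesis (killing the off-diagonal Rayleigh--Schr\"odinger sum) more transparent.

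One small imprecision worth flagging: the eigenvalue-by-eigenvalue formula $\mu_i(\ep)=\lambda_i+\ep a_i+\ep^2 B_{ii}+O(\ep^3)$ is not quite right when $\rho_0$ and $A$ have a joint degenerate eigenspace, since the $\ep^2$ correction then involves the eigenvalues of $B$ restricted to that subspace rather than its diagonal entries. This does not damage your conclusion because you only ever sum these corrections over $i$, and the sum equals $\Tr[B]$ restricted to the subspace regardless of how it splits — so $S''(0)$ is still given by the trace formula. You should add a sentence acknowledging this. Otherwise the argument is complete; in particular the algebra $S''(0)=-2\Tr[B\log\rho_0]-\Tr[A^2\rho_0^{-1}]-2\Tr[B]$ and the cancellation of the $-2\Tr[B]$ against the $+2$ from the normalization both check out.
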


\begin{proof}
For convenience define $a = xx^*$, $b = xy^* + yx^*$, and $c = yy^*$, so that
\[
(x+ \eps y)(x+\eps y)^* = a + \eps b + \eps^2c.
\]
Note that $\Tr[a] = \Tr[c] = 1$ and $\Tr[b] = 0$, so $\Tr[(x+ \eps y)(x+\eps y)^*] = 1 + \eps^2$. Then 
\begin{align*}
& \rH(x + \eps y) = -\Tr \left[\frac{a + \eps b + \eps^2c}{1 + \eps^2} \log \frac{a + \eps b + \eps^2c}{1 + \eps^2} \right] \\
& = -\frac{\Tr[(a + \eps b + \eps^2c) \log (a + \eps b + \eps^2c)]}{1 + \eps^2} + \log (1 + \eps^2).
\end{align*}

Up to a second order in $\varepsilon$ this expression becomes
\begin{align*}
& \rH(x + \eps y) =-{\rm Tr}\left[a\log\left(a+\varepsilon b+\varepsilon ^2 c\right)\right] \\
& \; -\varepsilon{\rm Tr}\left[b\log\left(a+\varepsilon b\right)\right]+\varepsilon^2\big(1+{\rm Tr}\left[a \log a\right]-{\rm Tr}\left[c\log a\right]\big).
\end{align*}
Therefore, the second order directional derivative can be expressed in the following way:
\begin{align*}
& D^{2}_{y}\rH(x)=-\frac{d^2}{d\varepsilon^2}{\rm Tr}\left[a\log\left(a+\varepsilon b+\varepsilon ^2 c\right)\right] \Big|_{\varepsilon=0} - \\
& 2\frac{d}{d\varepsilon}{\rm Tr}\left[b\log\left(a+\varepsilon b\right)\right]\Big|_{\varepsilon=0}\!+\!2\big(1+{\rm Tr}\left[a \log a\right]-{\rm Tr}\left[c\log a\right]\big).
\end{align*}
To calculate the derivative expressions above, we will express the log function by its Taylor series:
$$
\log(a+\varepsilon b)=\log[I-(I-a-\varepsilon b)]=-\sum_{n=1}^{\infty}\frac{(I-a -\varepsilon b)^n}{n}.
$$
Without loss of generality (see Lemma~\ref{lemdyf=inf}), in the last equality we assumed that $a$ is invertible, so that for sufficiently small $\varepsilon$
also $a+\varepsilon b$ is invertible and therefore $I-a-\varepsilon b<I$. To calculate the derivative of ${\rm Tr}\left[b\log\left(a+\varepsilon b\right)\right]$ at $\varepsilon = 0$, we only need to take terms proportional to $\varepsilon$ in the expansion of the logarithm. Assuming $a$ and $b$ commute,
\begin{align*}
\frac{d}{d\varepsilon}{\rm Tr}\left[b\log\left(a+\varepsilon b\right)\right]\Big|_{\varepsilon=0} & =\sum_{n=1}^{\infty}{\rm Tr}\left[b^2(I-a)^{n-1}\right] \\
& = {\rm Tr}\left[b^2a^{-1}\right]. 
\end{align*}
To calculate the second derivative of ${\rm Tr}[a\log(a+\varepsilon b+\varepsilon ^2 c)]$ we need only take the terms proportional to $\varepsilon^2$. Again assuming $a$ and $b$ commute and $a$ is invertible,
\begin{align*}
& \frac{d^2}{d\varepsilon^2}{\rm Tr}\left[a\log\left(a+\varepsilon b+\varepsilon ^2 c\right)\right]
\Big|_{\varepsilon=0} \\
&=-\sum_{n=2}^{\infty}\frac{2}{n}\binom{n}{2}{\rm Tr}\left[a(I-a)^{n-2}b^2\right]+2\sum_{n=1}^{\infty}{\rm Tr}\left[a(I-a)^{n-1}c\right] \\
&= -{\rm Tr}\left[a^{-1}b^2\right]+2\Tr[c]  =-{\rm Tr}\left[a^{-1}b^2\right]+2.
\end{align*}

Therefore $D^{2}_{y}\rH(x)=2{\rm Tr}\left[a \log a\right]-2{\rm Tr}\left[c\log a\right]-{\rm Tr}[b^2a^{-1}]$.
\end{proof}

\begin{corollary} 
\label{cor:secondderiv}
Assume $xx^*$ and $xy^*$ commute. Then $D^2_y \rH(x) > 0$ if and only if 
\begin{align*}
\abs{\Tr[(xx^*)^{-1}(xy^*)^2]} + \Tr[(xx^*)^{-1}xy^*yx^*]  \\
< \Tr[xx^* \log xx^*] - \Tr[yy^* \log xx^*]\;,
\end{align*}
where $(xx^{*})^{-1}$ is the inverse over the support of $xx^*$.
\end{corollary}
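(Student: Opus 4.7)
The starting point is the closed-form expression for $D^2_y\rH(x)$ from Lemma~\ref{lem:secondderiv}; the task is to re-package the trace term $\Tr[(xy^*+yx^*)^2(xx^*)^{-1}]$ into the two-term form appearing on the left of the claimed inequality. Expanding the square produces four pieces: $(xy^*)^2$, $(yx^*)^2$, $xy^*yx^*$, and $yx^*xy^*$. First I would treat the ``pure'' pair: since $(yx^*)^2=((xy^*)^2)^*$ and $(xx^*)^{-1}$ is Hermitian, their traces against $(xx^*)^{-1}$ are complex conjugates, and so their sum equals $2\real\Tr[(xy^*)^2(xx^*)^{-1}]$.

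Next I would exploit the commutativity hypothesis. The assumption that $xx^*$ commutes with $xy^*$ passes to $(xx^*)^{-1}$ (interpreted on the support of $xx^*$), and combining this commutation with cyclicity of the trace gives
\[
\Tr[yx^*xy^*(xx^*)^{-1}]=\Tr[yx^*(xx^*)^{-1}xy^*]=\Tr[xy^*yx^*(xx^*)^{-1}],
\]
so the two cross pieces coincide. Assembling the four contributions,
\[
\Tr[(xy^*+yx^*)^2(xx^*)^{-1}]=2\real\Tr[(xy^*)^2(xx^*)^{-1}]+2\Tr[xy^*yx^*(xx^*)^{-1}],
\]
and dividing the inequality $D^2_y\rH(x)>0$ supplied by Lemma~\ref{lem:secondderiv} by $2$ yields exactly the claimed inequality, but with $\real\Tr[(xy^*)^2(xx^*)^{-1}]$ in place of the absolute value.

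To upgrade the real part into an absolute value I would invoke the $U(1)$ freedom in the tangent direction. Replacing $y$ by $e^{i\theta}y$ leaves $yy^*$, $xy^*yx^*$, the orthogonality $\Tr[xy^*]=0$, and the commutativity of $xy^*$ with $xx^*$ all unchanged, but multiplies $\Tr[(xy^*)^2(xx^*)^{-1}]$ by $e^{-2i\theta}$. Since each $e^{i\theta}y$ is an equally legitimate tangent direction, the second-variation test at a local minimum must hold for the whole $U(1)$ orbit, and by the previous paragraph this is equivalent to
\[
\sup_\theta\real\!\bigl(e^{-2i\theta}\Tr[(xy^*)^2(xx^*)^{-1}]\bigr)+\Tr[xy^*yx^*(xx^*)^{-1}]<\Tr[xx^*\log xx^*]-\Tr[yy^*\log xx^*].
\]
The supremum on the left is attained at $\theta=\tfrac12\arg\Tr[(xy^*)^2(xx^*)^{-1}]$ and equals $|\Tr[(xy^*)^2(xx^*)^{-1}]|$, which completes the proof. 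The algebra is mostly routine; the one delicate step is the commutativity-assisted identification of the two cross terms $\Tr[xy^*yx^*(xx^*)^{-1}]$ and $\Tr[yx^*xy^*(xx^*)^{-1}]$, and the main conceptual point—the part I expect to be least obvious—is the $U(1)$-orbit reinterpretation that justifies promoting the natural $\real$ to the stated $|\cdot|$.
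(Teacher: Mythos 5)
Your proof is correct and follows essentially the same route as the paper: expand $(xy^*+yx^*)^2$ into four pieces, combine the two pure terms into $2\real\Tr[(xx^*)^{-1}(xy^*)^2]$ using Hermiticity, combine the two cross terms into $2\Tr[(xx^*)^{-1}xy^*yx^*]$ using the commutation hypothesis and trace cyclicity, then replace $\real$ by $|\cdot|$ by maximizing over the phase of $y$. The paper compresses the last step to a single clause (``the largest value of these two terms over all phases of $y$''), whereas you spell out the $U(1)$-orbit reasoning more fully, but the content is identical.
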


\begin{proof}
Expand $(xy^*+yx^*)^2$ into four terms, noting that $xy^*$ and $yx^*$ commute with $(xx^*)^{-1}$. Then $D_y\rH(x) > 0$ if and only if
\begin{align*}
& \Tr[(xx^*)^{-1}(xy^*)^2] + \Tr[(xx^*)^{-1}+(yx^*)^2] + 2\Tr[(xx^*)^{-1}xy^*yx^*] \\
& < -2\Tr[yy^* \log xx^*]  + 2\Tr[xx^* \log xx^*].
\end{align*}
The first two terms on the LHS are twice the real part of $\Tr[(xx^*)^{-1}(xy^*)^2]$; the largest value of these two terms over all phases of $y$ is $2\abs{\Tr[(xx^*)^{-1}(xy^*)^2]}$.
\end{proof}

For convenience, denote the terms in Corollary \ref{cor:secondderiv} as follows:
\begin{align}
a(x,y) & := \abs{\Tr[(xx^*)^{-1}(xy^*)^2]}, \notag \\
b(x,y) & := \Tr[(xx^*)^{-1}xy^*yx^*], \label{eqn:abcd} \\
c(x) & := \Tr[xx^* \log xx^*],  \notag \\
d(x,y) & := \Tr[yy^* \log xx^*], \notag 
\end{align}
so $D^2_y \rH(x) > 0$ if and only if $a + b < c - d$. Each of these terms behaves nicely under tensor products:
\begin{align}
a(x_1 \otimes x_2,y_1 \otimes y_2) & = a(x_1,y_1) a(x_2,y_2), \label{eqn:abcdtensor} \\
b(x_1 \otimes x_2,y_1 \otimes y_2) & = b(x_1,y_1) b(x_2,y_2), \notag \\
c(x_1 \otimes x_2) & = c(x_1) + c(x_2),\notag \\
d(x_1 \otimes x_2,y_1 \otimes y_2) & = d(x_1,y_1) + d(x_2,y_2).\notag 
\end{align}
We can also bound the size of some of these terms for any $x$ and $y$ such that $\Tr[xy^*] = 0$ and $\Tr[xx^*] = \Tr[yy^*] = 1$. First, we claim $b \in [0,1]$. To see this, note that $P = x^*(xx^*)^{-1}x$ is a projection matrix,
so
\[
b = \Tr[y^*yx^*(xx^*)^{-1}x] = \norm{Py^*}^2,
\]
and $0 \leq \norm{Py^*}^2 \leq \norm{y^*} = 1$. Second, we claim $a \in [0,b]$. To see this, note that without loss of generality (see Lemma~\ref{lemdyf=inf}) we can assume that $xx^*$ is invertible and therefore positive definite, so $(xx^*)^{-1/2}$ exists and commutes with $xy^*$, and so $(xx^*)^{-1}(xy^*)^2 = ((xx^*)^{-1/2}xy^*)^2$. By Cauchy-Schwartz,
\begin{align*}
a & = \abs{\Tr[((xx^*)^{-1/2}xy^*)^2]} \\
& \leq \Tr[((xx^*)^{-1/2}xy^*)((xx^*)^{-1/2}xy^*)^*] = b.
\end{align*}
Thirdly, we claim that $c \leq 0$, since it is the negative of the entropy function. We are now ready to prove the main result
of this paper.

\begin{theorem}
Suppose $x_1$ and $x_2$ are strong local minima of $x \mapsto -\Tr[xx^* \log xx^*]$ subject to $\Tr[xx^*] = 1$ and $x_i \in K_i$, where $K_i$ is a subspace. Further assume that for every $y_i \in K_i$, the matrices $x_ix_i^*$ and $x_iy_i^*$ commute. Then $x := x_1 \otimes x_2$ is a strong local minimum in $K_1 \otimes K_2$.
\end{theorem}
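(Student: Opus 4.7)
The plan is to evaluate Corollary~\ref{cor:secondderiv} at $x = x_1 \otimes x_2$ on an arbitrary direction $y \in x^\perp$, using the orthogonal decomposition from Section~\ref{sec:firstderiv}:
\[
x^\perp = \langle x_1\rangle \otimes x_2^\perp \;\oplus\; x_1^\perp \otimes \langle x_2\rangle \;\oplus\; x_1^\perp \otimes x_2^\perp,
\]
writing $y = y_A + y_B + y_C$ with $y_A = x_1 \otimes u$ ($u \in x_2^\perp$), $y_B = v \otimes x_2$ ($v \in x_1^\perp$), and $y_C \in x_1^\perp \otimes x_2^\perp$. I would first verify that the local commutativity condition extends to $K_1 \otimes K_2$: for any $y = \sum_i u_i \otimes v_i$ we have $xy^* = \sum_i (x_1 u_i^*) \otimes (x_2 v_i^*)$, and each summand commutes with $xx^* = x_1 x_1^* \otimes x_2 x_2^*$ because $x_k w^*$ commutes with $x_k x_k^*$ for every $w \in K_k$. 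Hence Lemma~\ref{lem:secondderiv} applies and the goal becomes to prove $a(x,y) + b(x,y) < c(x) - d(x,y)$.

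For the ``diagonal'' block $y = y_A$, the tensorization identities \eqref{eqn:abcdtensor}, combined with $a(x_k,x_k) = b(x_k,x_k) = 1$ and $d(x_k,x_k) = c(x_k)$, collapse the desired inequality to $a(x_2,u)+b(x_2,u) < c(x_2)-d(x_2,u)$, which is precisely the strong-local-minimum property at $x_2$. The $y_B$ block is symmetric. For $y = y_C$, which need not be a pure tensor, I would invoke a Schmidt decomposition $y_C = \sum_k \sigma_k\,\tilde p_k \otimes \tilde q_k$ in the Hilbert--Schmidt inner product; then $d(x,y_C)$ splits cleanly as $\sum_k \sigma_k^2\bigl(d(x_1,\tilde p_k)+d(x_2,\tilde q_k)\bigr)$, while $b(x,y_C)$ becomes a Schur product of two positive semidefinite Gram matrices with diagonals in $[0,1]$. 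Combining the strong minima at $x_1, x_2$ with the AM-GM-style bound $A_1A_2 + B_1B_2 \le (A_1+A_2+B_1+B_2)/2$ valid for $A_i,B_i \in [0,1]$ (which I would first verify for pure tensors in $x_1^\perp \otimes x_2^\perp$ and then extend by Hadamard positivity) yields the strict inequality on the $y_C$ block.

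The main obstacle will be the cross terms that appear when $a(x,y) + b(x,y) + d(x,y)$ is expanded as a real quadratic form in $y_A + y_B + y_C$. Fortunately the $y_A$--$y_B$ cross terms vanish identically, since every trace in the expansion of $d$, $b$, or $\Tr[M(xy^*)^2]$ applied to $y_A y_B^*$ carries a scalar factor $\Tr[x_k w^*]=0$ for $w \in x_k^\perp$ in one of the two tensor slots. The cross terms between $y_C$ and $y_A, y_B$ do not vanish in general, and they are what I expect to be the hardest part; I would control them by Cauchy--Schwarz in the Hilbert--Schmidt inner product together with the universal estimates $a \le b \le 1$ and $c, d \le 0$, so that they are absorbed by the strict slack already present in the diagonal inequalities. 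This yields $a(x,y) + b(x,y) < c(x) - d(x,y)$ for every nonzero $y \in x^\perp$, establishing $D^2_y \rH(x) > 0$ and hence that $x_1 \otimes x_2$ is a strong local minimum in $K_1 \otimes K_2$.
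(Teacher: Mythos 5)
Your plan shares the paper's overall structure: reduce to Corollary~\ref{cor:secondderiv}, decompose $x^\perp = \langle x_1\rangle \otimes x_2^\perp \oplus x_1^\perp \otimes \langle x_2\rangle \oplus x_1^\perp \otimes x_2^\perp$, and argue block by block. Your treatment of the $y_C \in x_1^\perp \otimes x_2^\perp$ block is, however, genuinely different from the paper's: you Schmidt-decompose $y_C$ directly and invoke a Schur/Hadamard positivity argument, whereas the paper first splits $y_C = \al u + \be v$ orthogonally into a component $u$ in the range of the projection $P = x^*(xx^*)^{-1}x$ (where $b(x,u)=1$) and a component $v$ in its kernel (where $b(x,v)=0$), and only then Schmidt-decomposes each piece. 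Your route can be made to work and is arguably cleaner: with $G^{(i)}_{kl} := \Tr\bigl[\tilde p_l\,P_i\,\tilde p_k^*\bigr]$ one has $0 \preceq G^{(i)} \preceq I$, so Schur's product theorem applied to $G^{(1)}\circ(I-G^{(2)})$ and $(I-G^{(1)})\circ G^{(2)}$ gives $G^{(1)}\circ G^{(2)} \preceq \tfrac{1}{2}(D_1+D_2)$; combining this with the universal bound $a \leq b$ yields $a(x,y_C)+b(x,y_C) \leq \sum_k \sigma_k^2\bigl[b(x_1,\tilde p_k)+b(x_2,\tilde q_k)\bigr]$, and the strong-local-minimum inequality at $x_1,x_2$ finishes the block without ever introducing $u$ and $v$.

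The serious gap is your handling of the cross terms. You assert that the $y_A$--$y_C$ and $y_B$--$y_C$ cross terms ``do not vanish in general'' and plan to absorb them by Cauchy--Schwarz. In fact they \emph{do} vanish, and proving this is the crux of the final step, not something to estimate away. For $b$ and for the trace inside $a$, every cross term carries a scalar factor that, after a cyclic permutation using $x_i P_i = x_i$ or $(x_ix_i^*)^{-1}(x_ix_i^*)=I$, reduces to $\Tr[x_i w^*]$ with $w \in x_i^\perp$, hence zero. For $d$, the $y_A$--$y_C$ cross term contains the factor $\Tr[x_1 \tilde p_k^* \log x_1x_1^*]$, which vanishes only because of the first-order criticality identity $\Tr[x_iy_i^*\log x_ix_i^*]=0$ for $y_i \in x_i^\perp$ (this follows from $D_{y_i}\rH(x_i)=0$, since $x_i$ is a local minimum). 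You do not invoke this identity, and without it the $d$-cross-terms do not disappear. Nor would Cauchy--Schwarz save the day: the cross contributions are of the form $\al\ga\cdot(\text{fixed number})$, whereas the slack available from the diagonal blocks is $\al^2\eps_A+\be^2\eps_B+\ga^2\eps_C$ with $\eps$'s depending on the chosen directions and admitting no a priori lower bound relative to the cross-term constants, so a mere magnitude bound cannot close the inequality. The argument works only because those constants are exactly zero; you must establish the vanishing, not estimate it.
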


\begin{proof}
We show that under the hypotheses of the theorem, if $D^2_{y_i} \rH(x_i)$ is positive for every $y_i \in x_i^\perp$, then $D^2_{y} \rH(x)$ is positive for every $y \in x^{\perp}$. We break the proof into several cases depending on $y$.

First, suppose $y$ is a tensor product. 

\underline{Case $y = x_1 \otimes y_2$, $y_2 \in x_2^\perp$:} Since $y_2 \in x_2^\perp$ and $x_2$ is a strong local minimum, we know that 
\[
a(x_2,y_2) + b(x_2,y_2) < c(x_2) - d(x_2,y_2).
\]
It is also easy to see from the expressions \eqref{eqn:abcd} that
\[
a(x_1,x_1) = b(x_1,x_1) = 1, \quad c(x_1) = d(x_1,x_1).
\]
So, using the expressions for tensors in \eqref{eqn:abcdtensor}, we have
\begin{align*}
a(x,y) + b(x,y) & = a(x_2,y_2) + b(x_2,y_2) \\
& \leq c(x_2) - d(x_2,y_2) \\
& = c(x) - d(x,y).
\end{align*}
Thus the second directional derivative is positive for this choice of $y$.

\underline{Case $y = y_1 \otimes x_2$, $y_1 \in x_1^\perp$:} This case is similar to $y = x_1 \otimes y_2$.

\underline{Case $y = y_1 \otimes y_2$, $y_i \in x_i^\perp$:} Here we require the arithmetic-geometric mean inequality. For two terms $a_1,a_2 \leq 1$,
\[
a_1a_2 \leq \Big(\frac{a_1+a_2}{2}\Big)^2  \leq \frac{1}{2}(a_1+a_2).
\]
In particular, $a(x_1,x_1)a(x_2,y_2) \leq a(x_1,y_1) + a(x_2,y_2)$ and similarly for $b$.
Now, since $y_i \in x_i^\perp$, we have $a(x_i,y_i) + b(x_i,y_i) < c(x_i) - d(x_i,y_i)$. Combining these inequalities we get $a(x,y) + b(x,y) \leq c(x) - d(x,y)$.

Next, we consider cases where $y$ is a linear combination of terms.

Suppose $y$ is in $x_1^\perp \otimes x_2^\perp$. In this case, we break $y$ into two orthogonal pieces according to the projection matrix $P = x^*(xx^*)^{-1}x$. Let $P_i = x_i^*(x_ix_i^*)^{-1}x_i$: this is the projection matrix onto the range of $x_i^*$, which we denote $R(x_i^*)$. Then $P = P_1 \otimes P_2$ is the projection matrix onto the range $R(x^*) = R(x_1^*) \otimes R(x_2^*)$. Write $y$ as a direct sum:
\[
y = \al u + \be v,
\]
where $u^* \in R(x^*)$ (so $Pu^* = u^*$), and $Pv^* = 0$. The normalizations are chosen so that $\al \in \bR$ and $\be \in \bR$ satisfy $\al^2 + \be^2 = 1$, and $\norm{u}^2 = \norm{v}^2 = 1$. We deal with the $u$ and $v$ components separately.

\underline{Case $y = u \in (x_1^\perp \otimes x_2^\perp) \cap R(x^*)$:} Here we have $b(x,u) = \norm{Pu^*}^2 = \norm{u^*}^2 = 1$. Note that if $y_i$ is in $x_i^\perp$, then 
\[
\Tr[x_iP_iy_i^*] = \Tr[x_ix_i^*(x_ix_i^*)^{-1}x_iy_i^*] = \Tr[x_iy_i^*] = 0,
\]
so $P_iy_i^*$ is also in $x_i^\perp$. If we write $u = \sum_j y_{1j} \otimes y_{2j}$ with $y_{ij} \in x_i^\perp$, so that
\[
u = Pu = \sum_j P_1y_{1j} \otimes P_2y_{2j},
\]
then $P_iy_{ij}$ is in $x_i^\perp \cap R(x_i^*)$, and it follows that $u$ is in $(x_1^\perp \cap R(x_1^*)) \otimes (x_2^\perp \cap R(x_2^*))$. Now perform a Schmidt decomposition of $u$ with respect to this tensor space: we get 
\[
u = \sum_j \al_j u_{1j} \otimes u_{2j},
\]
where $u_{ij} \in x_i^\perp \cap R(x_i^*)$, $\Tr[u_{ij}u_{ik}^*] = \delta_{jk}$, $\al_j \geq 0$, and $\sum_j \al_j^2 = 1$. Since $u_{ij}$ is in $R(x_i^*)$, we have $b(x_i,u_{ij}) = 1$. Since $u_{ij}$ is in $x_i^\perp$, we know 
\begin{equation}
\label{eqn:abcdu}
a(x_i,u_{ij}) + b(x_i,u_{ij}) \leq c(x_i) - d(x_i,u_{ij}),
\end{equation}
and also $0 \leq a(x_i,u_{ij}) \leq 1$. Under this decomposition, we also have
\begin{align}
d(x,u) 
 = \sum_j \al_j^2 (d(x_1,u_{1,j}) + d(x_2,u_{2,j})). \label{eqn:dxu}
\end{align}
Therefore, from \eqref{eqn:abcdu} and \eqref{eqn:dxu},
\begin{align*}
a(x,u) & + b(x,u) \\
& \leq 2b(x,u) \\
& = \sum_{j} \al_j^2 [b(x_1,u_{1,j}) + b(x_2,u_{1,2})] \\
& \leq \sum_{j} \al_j^2 [c(x_1) - d(x_1,u_{1,j})] + c(x_2) - d(x_2,u_{1,2}) \\
& = c(x) - d(x,u).
\end{align*}

\underline{Case $y = v \in x_1^\perp \otimes x_2^\perp$, $Pv^* = 0$:} We know that $0 \leq a(x,v) \leq b(x,v) = \norm{Pv^*} = 0$, and so $a(x,v) = b(x,v) = 0$. Perform a Schmidt decomposition of $v$ with respect to the space $x_1^\perp \otimes x_2^\perp$: 
\[
v = \sum_j \be_j v_{1j} \otimes v_{2j},
\]
where $v_{ij} \in x_i^\perp$, $\Tr[v_{ij}v_{ik}^*] = \delta_{jk}$ and $\sum_j \be_j^2 = 1$. Since $v_{ij}$ is in $x_i^\perp$, we have $0 \leq a(x_i,v_{ij}) \leq b(x_i,v_{ij})$ and 
\begin{equation}
\label{eqn:cdpos}
0 \leq a(x_i,v_{ij}) + b(x_i,v_{ij}) \leq c(x_i) - d(x_i,v_{ij}).
\end{equation}
It follows quickly that $a(x,v) + b(x,v) \leq c(x) - d(x,v)$.

Next we deal with a combination of $u$ and $v$.

\underline{Case $y \in x_1^\perp \otimes x_2^\perp$:} Write $y = \al u + \be v$, where $u^* \in R(x^*)$), $Pv^* = 0$, $\al^2 + \be^2 = 1$, and $\norm{u}^2 = \norm{v}^2 = 1$. Then since $uv^* = uPv^* = 0$, we have
\[
yy^* = \al^2 uu^* + \be^2 vv^*,
\]
from which it follows that 
\begin{align}
b(x,y) & = \al^2b(x,u) + \be^2b(x,v), \label{eqn:bxuyv} \\
d(x,y) & = \al^2d(x,u) + \be^2d(x,v). \label{eqn:dxuyv}
\end{align}
(In fact, $b(x,u) = 1$ and $b(x,v) = 0$.) Combining \eqref{eqn:bxuyv} and \eqref{eqn:dxuyv} with the results for $u$ and $v$ from the previous cases, we get
\begin{align*}
a(x,y) + b(x,y) & \leq 2b(x,y) \\
& = \al^22b(x,u) + \be^22b(x,v) \\
& \leq \al^2[c(x) - d(x,u)] + \be^2[c(x) - d(x,v)] \\
& = c(x) - d(x,y).
\end{align*}

Finally, we have the case where $y$ is an arbitrary element of $x^\perp$.

\underline{Case $y \in x^\perp$:} Here  $y$ may be written in the form
\[
y = \al x_1 \otimes y_2 + \be y_1 \otimes x_2 + \ga y',
\]
where $y_i \in x_i^\perp$ and $y' \in x_1^\perp \otimes x_2^\perp$, with real constants satifying $\al^2 + \be^2 + \ga^2 = 1$. Expanding out terms of $yy^*$ and simplifying, we find that most terms disappear under trace:
\begin{align}
d(x,y) & = \al^2 [c(x_1)+d(x_2,y_2)] + \be^2 [c(x_2)+d(x_1,y_1)] \notag \\
& \qquad + \ga^2d(x,y'), \label{eqn:dgen}\\
b(x,y) & = \al^2 b(x_2,y_2) + \be^2 b(x_1,y_1) + \ga^2b(x,y'), \label{eqn:bgen}\\
a(x,y) & = \Big| \al^2 \Tr[(x_2x_2^*)^{-1}(x_2y_2^*)^2] \\
+ & \be^2 \Tr[(x_1x_1^*)^{-1}(x_1y_1^*)^2] + \ga^2\Tr[(xx^*)^{-1}(x(y')^*)^2] \Big|. \notag
\end{align}
The expression for $d(x,y)$ requires the observation that $\Tr[x_iy_i^* \log x_ix_i^*] = 0$, because the first directional derivative of $D_{y_i}\rH(x_i)$ is $0$ when $x_i$ is a local minimum. The expression for $a(x,y)$ is bounded as follows:
\begin{align}
& a(x,y) = \Big| \al^2 \Tr[(x_2x_2^*)^{-1}(x_2y_2^*)^2] \notag \\
& + \be^2 \Tr[(x_1x_1^*)^{-1}(x_1y_1^*)^2] + \ga^2\Tr[(xx^*)^{-1}(x(y')^*)^2] \Big|  \notag \\
& \qquad \;\; \leq \al^2 \abs{\Tr[(x_2x_2^*)^{-1}(x_2y_2^*)^2]} \notag \\
& + \be^2 \abs{\Tr[(x_1x_1^*)^{-1}(x_1y_1^*)^2]} + \ga^2 \abs{\Tr[(xx^*)^{-1}(x(y')^*)^2]} \notag \\
& \qquad \;\; = \al^2a(x_2,y_2) + \be^2a(x_1,y_1) + \ga^2a(x,y'). \label{eqn:agen}
\end{align}
Combining \eqref{eqn:dgen}, \eqref{eqn:bgen} and \eqref{eqn:agen}, we get $a(x,y) + b(x,y) \leq c(x) - d(x,y)$.
\end{proof}

\section{The second derivative of the $2$-norm}
\label{sec:2norm}

In this section we focus on the 2-norm since its second directional derivative has an elegant analytical form.
We prove that if $K_1$ and $K_2$ are subspaces of matrices, at least one of which has dimension $2$, and $x_1 \in K_1$, $x_2 \in K_2$ are strong local maxima of the $2$-norm function
\[
x \mapsto \Tr[(xx^*)^2]
\]
subject to $\Tr[xx^*] = 1$, then $x_1 \otimes x_2$ is also a strong local maximum in $K_1 \otimes K_2$.
Since it is known that the $2$-norm is not globally additive, this result sheds some light on the possibility
that there exist functions that are locally additive while they are not globally additive.

We will work with the normalized function
\[
\rH_2(x) := \Tr\left[\bigg(\frac{xx^*}{\norm{x}^2}\bigg)^2\right] = \frac{\Tr[(xx^*)^2]}{[\Tr(xx^*)]^2}.
\]
As before, we consider $(x + \eps y)(x + \eps y)^* = xx^* + \eps(xy^* + yx^*) + \eps^2(yy^*)$, where $\Tr[xx^*] = \Tr[yy^*] = 1$ and $\Tr[xy^* + yx^*] = 0$. Noting that 
\begin{align*}
[(x + \eps y)(x + \eps y)^*]^2  = (xx^*)^2 + 2\eps xx^*(xy^* + yx^*) \\
 + \eps^2[2xx^*yy^* + (xy^* + yx^*)^2] + O(\eps^3),
\end{align*}
and that $\Tr[(x + \eps y)(x + \eps y)^*]^2 = (1+\eps^2)^2$, we have that up to second order in $\eps$,
\begin{align*}
\rH_2(x + \eps y) & = \Tr[(xx^*)^2] + 2\eps \Tr[xx^*(xy^* + yx^*)] \\
& + \eps^2\Tr[2xx^*yy^* + (xy^* + yx^*)^2 - 2(xx^*)^2].
\end{align*}
Then the first directional derivative of $\rH_2$ is
\[
D_y \rH_2(x) = 2\Tr[xx^*(xy^* + yx^*)].
\]
By considering $iy$ as well as $y$, the condition $D_y \rH_2(x) = 0$ reduces to $\Tr[xx^*xy^*] = 0$. The second derivative is 
\[
D^2_y \rH_2(x) = 2\Tr[(xy^* + yx^*)^2] + 4\Tr[xx^*yy^*] -4 \Tr[(xx^*)^2].
\]
For strong local maxima we expect $D^2_y \rH_2(x)$ to be negative. Expand $(xy^* + yx^*)^2$ into four terms: then
\[
\Tr[(xy^* + yx^*)^2] = 2 \bRp \Tr[(xy^*)^2] + \Tr[x^*xy^*y],
\]
where $\bRp(z)$ denotes the real part of $z$. The largest value of $\bRp \Tr[(xy^*)^2]$ over all choices of unit multiples of $y$ is $\abs{\Tr[(xy^*)^2]}$. In summary: 

\begin{lemma}
Define $F(x,y) := $
\[
-\Tr[(xx^*)^2] + \abs{\Tr[(xy^*)^2]} + \Tr[xx^*yy^*] + \Tr[x^*xy^*y].
\]
Then $x$ is a strong local maximum of the function $x \mapsto \Tr[(xx^*)^2]$, subject to $\Tr[xx^*] = 1$, if and only if every $y \in x^\perp$, $\Tr[yy^*] = 1$ satisfies $\Tr[xx^*xy^*] = 0$ and $F(x,y) < 0$.
\end{lemma}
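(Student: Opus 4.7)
The statement is essentially a repackaging of the first- and second-derivative formulas for $\rH_2$ derived just above, together with a choice-of-phase argument; there is no substantial obstacle. The plan is to verify both directions of the biconditional. Throughout I use that $\rH_2$ is homogeneous of degree zero and phase-invariant, so testing strong local extremality of $\Tr[(xx^*)^2]$ on the manifold $\{\Tr[xx^*]=1\}$ is equivalent to checking the first- and second-order behavior of $\rH_2$ against every $y\in x^\perp$ with $\Tr[yy^*]=1$.

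For the \emph{only if} direction, assume $x$ is a strong local maximum. The vanishing of $D_y\rH_2(x) = 2\Tr[xx^*(xy^*+yx^*)] = 4\real\Tr[xx^*xy^*]$ for all $y\in x^\perp$, applied to both $y$ and $iy$, forces $\Tr[xx^*xy^*]=0$. For the second-order inequality, given a unit $y\in x^\perp$, replace $y$ by $e^{i\theta/2}y$. This substitution preserves $\Tr[xy^*]$, $\Tr[yy^*]$, $\Tr[xx^*yy^*]$, and $\Tr[x^*xy^*y]$, but multiplies $\Tr[(xy^*)^2]$ by $e^{-i\theta}$. Choosing $\theta$ so that the latter quantity is real and nonnegative converts $\real\Tr[(xy^*)^2]$ into $|\Tr[(xy^*)^2]|$, so the second directional derivative along the rotated direction equals $4F(x,y)$, which must then be strictly negative.

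The \emph{if} direction simply runs backward. Assume $\Tr[xx^*xy^*]=0$ and $F(x,y)<0$ for every unit $y\in x^\perp$. The first condition gives $D_y\rH_2(x)=0$ on $x^\perp$, hence by scale- and phase-invariance on the full tangent space to the constraint manifold. The trivial bound $\real\Tr[(xy^*)^2]\le |\Tr[(xy^*)^2]|$ then yields $D^2_y\rH_2(x)\le 4F(x,y)<0$ for every unit $y\in x^\perp$, and homogeneity of $\rH_2$ extends the strict inequality to all nonzero $y\in x^\perp$, so $x$ is a strong local maximum. The one point requiring any care is justifying that testing on $y\in x^\perp$ captures all nontrivial tangent directions to $\{\Tr[xx^*]=1\}$; this follows from the two invariances already invoked and is routine.
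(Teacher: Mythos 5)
Your proposal is correct and follows essentially the same route as the paper, which presents this lemma as a summary of the first- and second-derivative computations carried out immediately above it; the one genuinely nontrivial ingredient, replacing $y$ by $e^{i\theta/2}y$ to convert $\real\Tr[(xy^*)^2]$ into $\abs{\Tr[(xy^*)^2]}$ while leaving the other three terms and the orthogonality constraint unchanged, is exactly the paper's phase argument, and you handle both directions of the biconditional correctly. As a side note, your computation quietly corrects a coefficient slip in the paper's intermediate line: expanding $(xy^*+yx^*)^2$ gives $\Tr[(xy^*+yx^*)^2] = 2\real\Tr[(xy^*)^2] + 2\Tr[x^*xy^*y]$ (the two cross terms both contribute $\Tr[x^*xy^*y]$), so $D^2_y\rH_2(x)$ is exactly $4$ times the pre-phase version of $F(x,y)$, matching your factor of $4$ and the lemma as stated.
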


Denote the terms in $F(x,y)$ as follows:
\begin{align*}
a(x) & := \Tr[(xx^*)^2] \\
b(x,y) & := \abs{\Tr[(xy^*)^2]} \\
c(x,y) & := \Tr[xx^*yy^*] \\
d(x,y) & := \Tr[x^*xy^*y].
\end{align*}
The Cauchy-Schwartz inequality implies that $\abs{\Tr[z^2]} \leq \Tr[z^*z]$ for any matrix $z$. Letting $z = xy^*$, we conclude that $0 \leq b(x,y) \leq c(x,y),d(x,y)$. Assuming $a> b+c+d$ therefore implies that $a > b,c,d$. Since $a = \Tr[(xx^*)^2)] \leq \Tr[xx^*]= 1$, we see that each of the terms $a,b,c,d$ are in the range $[0,1]$. Furthermore, each term is multiplicative under tensor products: $a(x_1 \otimes x_2)  = a(x_1) a(x_2)$, $b(x_1 \otimes x_2,y_1 \otimes y_2)  = b(x_1,y_1) b(x_2,y_2)$, and so on.

From Section \ref{sec:firstderiv}, we know that that tensor products of critical points of the $2$-norm are again critical points. We can now say the same for local maxima.

\begin{lemma}
Suppose $x_1$ and $x_2$ are strong local maxima of $x \mapsto \Tr[(xx^*)^2]$ subject to $\Tr[xx^*] = 1$ and $x_i \in K_i$, where either $K_1$ or $K_2$ has dimension $2$. Then $x := x_1 \otimes x_2$ is a strong local maximum in $K_1 \otimes K_2$.
\end{lemma}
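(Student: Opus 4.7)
The plan is to imitate the strategy of the preceding theorem (Section~\ref{sec:shannon}): decompose a test direction $y \in x^\perp$ with $x = x_1 \otimes x_2$ into a small number of simple-tensor pieces, evaluate the four quantities $a,b,c,d$ on each piece, and invoke the single-system strict inequality $a(x_i) > b(x_i,\cdot) + c(x_i,\cdot) + d(x_i,\cdot)$ on each factor. Assume without loss of generality $\dim K_1 = 2$: this is precisely what makes the decomposition terminate after three simple tensors.

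First I would write down the decomposition. Fix a unit vector $y_1 \in K_1$ spanning the one-dimensional complement $x_1^\perp$. Any $y \in K_1 \otimes K_2$ has the form $y = x_1 \otimes w_1 + y_1 \otimes w_2$ with $w_1,w_2 \in K_2$, and the orthogonality $\Tr[(x_1 \otimes x_2)y^*] = 0$ forces $w_1 \in x_2^\perp$ while leaving $w_2$ free. Splitting $w_2 = \tau x_2 + z$ with $z \in x_2^\perp$ yields the three-term form
\[
y = \alpha (x_1 \otimes \hat{w}_1) + \tau (y_1 \otimes x_2) + \gamma (y_1 \otimes \hat{z}),
\]
with $\alpha = \|w_1\| \geq 0$, $\gamma = \|z\| \geq 0$, unit vectors $\hat{w}_1,\hat{z} \in x_2^\perp$, $\tau \in \mathbb{C}$, and normalization $\alpha^2 + |\tau|^2 + \gamma^2 = 1$.

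Next I would compute $b,c,d$ on this $y$. Expanding $yy^*$, $y^*y$, and $(xy^*)^2$ produces three diagonal plus six cross tensor-summands in each case. Every cross term carries a factor of $\Tr[x_1 x_1^* x_1 y_1^*]$ or $\Tr[x_2 x_2^* x_2 \hat{w}_1^*]$ (or the same with $\hat{z}$), and each such factor vanishes by the $2$-norm critical-point condition from Section~\ref{sec:firstderiv}. Thus the diagonal terms give, for instance,
\begin{align*}
c(x,y) &= \alpha^2 a(x_1) c(x_2,\hat{w}_1) + |\tau|^2 c(x_1,y_1) a(x_2) \\
       &\qquad + \gamma^2 c(x_1,y_1) c(x_2,\hat{z}),
\end{align*}
the identical formula with $d$ in place of $c$, and the same upper bound for $b(x,y)$ after the triangle inequality is applied to $|\Tr[(xy^*)^2]|$. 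For the first and second pieces, the strong local-maximum hypotheses on $x_2$ and on $x_1$ give respectively $\alpha^2 a(x_1)[b+c+d](x_2,\hat{w}_1) < \alpha^2 a(x_1) a(x_2)$ and $|\tau|^2 [b+c+d](x_1,y_1) a(x_2) < |\tau|^2 a(x_1) a(x_2)$. For the third piece, the elementary inequality
\[
b_1 b_2' + c_1 c_2' + d_1 d_2' \leq (b_1+c_1+d_1)(b_2'+c_2'+d_2') < a(x_1) a(x_2)
\]
(product expansion, discarding nonnegative cross terms, then both hypotheses) contributes at most $\gamma^2 a(x_1) a(x_2)$. Summing and using $\alpha^2 + |\tau|^2 + \gamma^2 = 1$ yields $b(x,y) + c(x,y) + d(x,y) < a(x)$ strictly, which is precisely $D^2_y \rH_2(x) < 0$.

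The main obstacle is structural rather than computational: the whole argument hinges on the decomposition $y = x_1 \otimes w_1 + y_1 \otimes w_2$ having a single $w_2 \in K_2$, and this uses $\dim K_1 = 2$ essentially. For $\dim K_1 > 2$ the second summand becomes $\sum_i y_{1,i} \otimes w_{2,i}$, and the computation of $c$ and $d$ acquires off-diagonal terms $\Tr[y_{1,i} y_{1,j}^*]$ that do not split as products of $K_1$- and $K_2$-quantities, so the bookkeeping of the middle step collapses. A minor technical point is that the triangle inequality applied to $b(x,y)$ loses some sharpness, but only the upper bound is needed for the conclusion.
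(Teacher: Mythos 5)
Your proof is correct and follows essentially the same route as the paper's. You decompose a general test direction $y \in x^\perp$ into the same three simple-tensor pieces ($x_1 \otimes$ something in $x_2^\perp$, $y_1 \otimes x_2$, $y_1 \otimes$ something in $x_2^\perp$), observe that the cross terms drop out under trace because of the critical-point condition $\Tr[x_i x_i^* x_i y_i^*] = 0$, and close with the same product inequality $b_1 b_2 + c_1 c_2 + d_1 d_2 \leq (b_1+c_1+d_1)(b_2+c_2+d_2) < a_1 a_2$ on the $y_1 \otimes \hat z$ piece; the only cosmetic difference is that you derive the three-term decomposition explicitly from $y = x_1 \otimes w_1 + y_1 \otimes w_2$ rather than asserting it, and you treat the mixed case directly instead of first checking each pure tensor in isolation.
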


\begin{proof}
Without loss of generality, assume $K_1$ has dimension $2$, so $x_1^\perp$ has dimension $1$. Let $y_1$ be an element of $x_1^\perp$ and let $y_{2j}$ be elements of $x_2^\perp$. Then every element of $x^\perp$ in $K_1 \otimes K_2$ is a linear combination of vectors of the form $y_1 \otimes y_{21}$, $x_1 \otimes y_{22}$, and $y_1 \otimes x_2$. First, we check that for each $y$ of that form, $F(x,y)$ is negative. 

\underline{Case $y = y_1 \otimes y_{21}$:} Here 
\begin{align*}
F(x,y) = - a(x_1)a(x_2) + b(x_1,y_1)b(x_2,y_{21}) \\
+ c(x_1,y_1)c(x_2,y_{21}) + d(x_1,y_1)d(x_2,y_{21}).
\end{align*}
Since $a(x_i) > b(x_i,y_i) + c(x_i,y_i) + d(x_i,y_i)$ and each term is nonnegative, it follows that 
$a(x_1)a(x_2) > b(x_1,y_1)b(x_2,y_{21}) + c(x_1,y_1)c(x_2,y_{21}) + d(x_1,y_1)d(x_2,y_{21})$,
and so $F(x,y)$ is negative.

\underline{Case $y = x_1 \otimes y_{22}$:} Here 
\begin{align*}
F(x,y) = - a(x_1)a(x_2) + a(x_1)b(x_2,y_{22}) \\
+ a(x_1)c(x_2,y_{22}) + a(x_1)d(x_2,y_{22}). 
\end{align*}
But $-a(x_2) + b(x_2,y_{22}) + c(x_2,y_{22}) + d(x_2,y_{22}) < 0$ and $a(x_1) > 0$, so $F(x,y)$ is negative. 

\underline{Case $y = y_1 \otimes x_2$:} Similar to $y = x_1 \otimes y_{22}$.

Now consider a linear combination of the three elements of $x^\perp$, say
\[
y = \al(y_1 \otimes y_{21}) + \be(x_1 \otimes y_{22}) + \ga(y_1 \otimes x_2).
\]
In considering $b(x,y)$, most terms disappear under trace:
\begin{align*}
& b(x,y)  = \Big|\al^2 \Tr[(x_1y_1^*)^2]\Tr[(x_2y_{21}^*)^2] \\
& + \be^2\Tr[(x_1x_1^*)^2]\Tr[(x_2y_{22}^*)^2] + \ga^2\Tr[(x_1y_1^*)^2]\Tr[(x_2x_2^*)^2] \Big| \\
& \qquad \;\; \leq \abs{\al}^2b(x_1,y_1)b(x_2,y_{21}) \\
&  + \abs{\be}^2a(x_1)b(x_2,y_{22}) + \abs{\ga}^2b(x_1,y_1)a(x_2).
\end{align*}
Likewise we have 
\begin{align*}
c(x,y) & = \abs{\al}^2c(x_1,y_1)c(x_2,y_{21}) + \\
& \abs{\be}^2a(x_1)c(x_2,y_{22}) + \abs{\ga}^2c(x_1,y_1)a(x_2),
\end{align*}
and similarly for $d(x,y)$. Adding together, we conclude that 
\begin{align*}
F(x,y) \leq  \abs{\al}^2\big[\neg a(x_1)a(x_2)+b(x_1,y_1)b(x_2,y_{21})\\
 + c(x_1,y_1)c(x_2,y_{21})+d(x_1,y_1)d(x_2,y_{21})\big] \\ 
 +  \abs{\be}^2a(x_1) \left[-a(x_2)+b(x_2,y_{22})+c(x_2,y_{22})+d(x_2,y_{22})\right] \\
 + \abs{\ga}^2 \left[-a(x_1)+b(x_1,y_1)+c(x_1,y_1)+d(x_1,y_1)\right]a(x_2).
\end{align*}
The $\abs{\al}^2$ term is negative by the argument given in the case $y = y_1 \otimes y_{21}$; the $\abs{\be}^2$ term is negative by the case $y = x_1 \otimes y_{22}$; and the $\abs{\ga}^2$ term is negative by the case $y = y_1 \otimes x_2$.
\end{proof}
 
If both $K_1$ and $K_2$ have dimension higher than $2$, the linear combinations seem to be more difficult.

\section*{Acknowledgment}
We appreciate many valuable discussions with Jon Yard. 
Research by GG and AR was supported by NSERC, PIMS, and iCORE.

\appendix
\section{Affine parametrization}
Let $x\in K\subset \C^{m\times n}$ and assume that $xx^*>0$, i.e. $xx^*$ invertible, and $\Tr xx^*=1$.
When now study the second variation.  Let $y\in K$ and assume that $\Tr(xy^*+yx^*)=0$.  We then consider
\begin{align} 
& A(\ep):=A+\ep B +\ep^2 C=(x+\ep y)(x+\ep y)^*, \nonumber\\
 & A=x x^*,B =x y^* +y x^*, C=yy^*\in \rH_m.\label{quadpar}
\end{align}
(Here $\rH_m$ is the real space of $m\times m$ matrices.)
Let $\lambda_1(\ep),\ldots,\lambda_m(\ep)>0$ be the eigenvalues of $A(\ep)$, as analytic functions of $\ep$, (Rellich's theorem~\cite{Kato}).   
We can assume that these eigenvalues are arranged in the following order 
$\lambda_1(\ep)\ge\ldots\ge\lambda_m(\ep)>0$ for small \textbf{positive} $\ep$.
Let $A_1(\ep)=A+\ep B$.  Arrange the analytic eigenvalues of $A_1(\ep)$ in the order $\mu_1(\ep)\ge \ldots\ge \mu_m(\ep)>0$ for small
positive $\ep$.  Clearly, $\lambda_i(\ep)=\mu_i(\ep) +O(\ep^2)$ for $i=1,\ldots,m$.
The following result is known, and can be deduced from the arguments in Kato~\cite{Kato}.
\begin{lemma}\label{secpertlemma}  Let $A,B,C\in \rH_m$, and denote $A(\ep)=A+\ep B +\ep^2 C, A_1(\ep)=A+\ep B$.
Assume that $\lambda_1(\ep),\ldots,\lambda_m(\ep)$ and $\mu_1(\ep), \ldots, \mu_m(\ep)$ are analytic eigenvalues of $A(\ep),A_1(\ep)$
arranged in a nonincreasing order for small positive $\ep$.  Then, there exists a unitary matrix $U\in \C^{m\times m}$ with the following two properties. First, $UAU^*=\diag(\lambda_1,\ldots,\lambda_m)$. Second, if we denote
$UCU^*\equiv F=[f_{ij}]_{i=j=1}^m$ then
\begin{equation}\label{secperteigf}
\lambda_i(\ep)=\mu_i(\ep)+\ep^2 f_{ii} +O(\ep^3) \textrm{ for } i=1,\ldots,m.
\end{equation}
\end{lemma} 
In the next proposition we use the above lemma to calculate the variation of $S(A)\equiv-\Tr(A\log A)$ up to second order.
\begin{proposition}\label{affparfor}
Let $x,y\in \C^{m\times n}$ and assume that $\Tr(xx^*)=\Tr(yy^*)=1, xx^*>0,\Tr(xy^*+yx^*)=0$.  Define $A(\ep),A_1(\ep)$ as in Eq.~\eqref{quadpar}.
Then
\begin{equation}\label{affparfor1}
S\left(\frac{A(\ep)}{\Tr A(\ep)}\right)=S(A_1(\ep))+\ep^2\Tr\left[(xx^*-yy^*) \log xx^*\right]
+O(\ep^3)
\end{equation}
\end{proposition}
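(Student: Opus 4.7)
The plan is to combine Lemma~\ref{secpertlemma} with a careful Taylor expansion of the normalization factor. Set $t(\ep):=\Tr A(\ep)$; since $\Tr A=1$, $\Tr B=0$, and $\Tr C=1$, we have $t(\ep)=1+\ep^2$. Using the elementary identity
\[
S\!\left(\frac{A(\ep)}{t(\ep)}\right)=\frac{1}{t(\ep)}\,S(A(\ep))+\log t(\ep),
\]
and expanding $1/t(\ep)=1-\ep^2+O(\ep^4)$ and $\log t(\ep)=\ep^2+O(\ep^4)$, one obtains
\[
S\!\left(\frac{A(\ep)}{t(\ep)}\right)=S(A(\ep))+\ep^2\bigl(1-S(xx^*)\bigr)+O(\ep^3),
\]
where I have used $S(A(\ep))=S(xx^*)+O(\ep)$ to absorb the cross term $-\ep^2 S(A(\ep))$ up to $O(\ep^3)$.

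Next, I would bring in Lemma~\ref{secpertlemma}. Let $U$ be the unitary guaranteed by that lemma, so $UAU^*=\diag(\lambda_1,\ldots,\lambda_m)$ with all $\lambda_i>0$ (thanks to the hypothesis $xx^*>0$), and let $F=UCU^*=[f_{ij}]$. Then $\lambda_i(\ep)=\mu_i(\ep)+\ep^2 f_{ii}+O(\ep^3)$. A one-step Taylor expansion of $t\mapsto -t\log t$ around $\mu_i(\ep)$ (using that $\mu_i(\ep)=\lambda_i+O(\ep)$ stays bounded away from $0$ for small $\ep$) yields
\[
-\lambda_i(\ep)\log\lambda_i(\ep)=-\mu_i(\ep)\log\mu_i(\ep)-\ep^2 f_{ii}\log\lambda_i-\ep^2 f_{ii}+O(\ep^3).
\]
Summing over $i$ and recognizing $\sum_i f_{ii}=\Tr F=\Tr C=1$ and $\sum_i f_{ii}\log\lambda_i=\Tr(F\log UAU^*)=\Tr(C\log A)=\Tr(yy^*\log xx^*)$, gives
\[
S(A(\ep))=S(A_1(\ep))-\ep^2\Tr(yy^*\log xx^*)-\ep^2+O(\ep^3).
\]

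The final step is bookkeeping: substitute this into the first display and observe that the two $-\ep^2$ terms cancel against $\ep^2\cdot 1$, and that $-\ep^2 S(xx^*)=\ep^2\Tr(xx^*\log xx^*)$, so
\[
S\!\left(\frac{A(\ep)}{t(\ep)}\right)=S(A_1(\ep))+\ep^2\Tr\bigl[(xx^*-yy^*)\log xx^*\bigr]+O(\ep^3),
\]
which is \eqref{affparfor1}. The main subtlety (more than a real obstacle) is justifying that the various remainders really are $O(\ep^3)$ rather than larger: this relies crucially on $xx^*>0$, which keeps the $\log$ uniformly smooth on a neighborhood of each $\lambda_i$, and on the fact (from Lemma~\ref{secpertlemma}) that the $\ep^2$ contribution to each eigenvalue $\lambda_i(\ep)-\mu_i(\ep)$ is captured by the diagonal entry $f_{ii}$ in the eigenbasis of $A$, so that the off-diagonal part of $C$ contributes only at order $\ep^3$.
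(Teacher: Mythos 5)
Your proof is correct and follows essentially the same route as the paper's: split off the normalization via the identity $S(A/t)=\frac{1}{t}S(A)+\log t$, expand the prefactors to second order, then invoke Lemma~\ref{secpertlemma} to get $\lambda_i(\ep)=\mu_i(\ep)+\ep^2 f_{ii}+O(\ep^3)$, Taylor-expand $-t\log t$ termwise, and identify $\sum_i f_{ii}=\Tr(yy^*)=1$ and $\sum_i f_{ii}\log\lambda_i=\Tr(yy^*\log xx^*)$. The only difference is that you spell out the termwise Taylor expansion and the cancellation of the two $\ep^2$ terms slightly more explicitly than the paper does.
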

\proof  First recall that
\begin{align}
&S\left(\frac{A(\ep)}{\Tr A(\ep)}\right)=\frac{1}{\Tr A(\ep)} S(A(\ep))+\log \Tr A(\ep)\nonumber\\
&= S(A(\ep))+\ep^2\left[-\Tr(yy^*)S(xx^*)+\Tr(yy^*)\right]+O(\ep^3).
\end{align}
Next we claim
\begin{align*}
& S(A(\ep)) =-\sum_{i=1}^m \lambda_i(\ep)\log \lambda_i(\ep)\\
& =-\sum_{i=1}^m (\mu_i(\ep)+f_{ii}\ep^2)\log (\mu_i(\ep)+f_{ii}\ep^2) +O(\ep^3)=\\
& -\sum_{i=1}^m \mu_i(\ep)\log \mu_i(\ep)-\ep^2(\sum_{i=1}^m f_{ii}\log \lambda_i+\sum_{i=1}^m f_{ii})+O(\ep^3)\\
& =S(A_1(\ep))-\ep^2(\Tr ((yy^*)\log (xx^*))+\Tr(yy^*))+O(\ep^3).
\end{align*}
Combine this expression with the expression above it to deduce \eqref{affparfor1}.  \qed

Note that the expression $\Tr\left[(xx^*-yy^*) \log xx^*\right]$ can be either positive or negative.
In the following we give a very simple reason why we can not ignore this term (i.e. use the affine approximation), which also yields
a necessary condition $xx*$ must satisfy if $x$ is a local minimum.

Assume that we have an affine subspace of the form $A+t B$, where 
$\Tr(A)=1, \Tr (B)=0$.  Here $A=xx^*, B=xy^*+yx^*$ on all $y\in K$ satisfying
the condition $\Tr(B)=0$ and $t$ arbitrary real. Let $\Phi$ be the set of all
$A+tB$ such that $A+tB\ge 0$.  Consider the function $S(C)=-\Tr(C\log C)$
where $C\in \Phi$.  Our assumption that $A$ is a critical point in $\Phi$
for the $S(C)$.  Since $S(C)$ is strictly concave on $\Phi$ it follows that
$A$ is a unique global MAXIMUM on $\Phi$!  So if $A$ was a local minimum
for the $H(x), x\in K, \Tr(xx^*)=1$ it follows that the correction term for
$\ep^2$ that we have must be \emph{strictly} positive .  That is, if $x$ is a local min then
\begin{align*}
& \Tr\left[(xx^*-yy^*) \log xx^*\right]\\
& =S(yy^*)-S(xx^*)+S(yy^*\|xx^*)>0
\end{align*}
for all $y\in x^{\perp}$ (assuming the normalization
$\Tr(yy^*)=1$).

\section{A counter example to real additivity conjecture}
 \label{sec:reals}

During the 2008 American Institute for Mathematics workshop ``Geometry and representation theory" \cite{AIM08}, Leonid Gurvits found a counterexample to the analogue of the additivity conjecture for real (rather than complex) matrices. In this appendix we generalize the counterexample to show that the additivity conjecture fails to hold for real spaces of orthogonal matrices containing the identity: there exist real subspaces $K_1 \subseteq \bR^{m_1 \times n_1}$ and $K_2\subseteq \bR^{m_2 \times n_2}$ such $\rH(K_1 \otimes K_2) < \rH(K_1) + \rH(K_2)$.

 $K\subseteq \bR^{m\times m}$ is called an orthogonal subspace
 if any $0\ne A\in K$ is of the form $aQ$ for some scalar $a$
 and an orthogonal matrix $Q$.  Note that if $K$ is an
 orthogonal subspace then for any orthogonal matrix $Q_0$, the subspace
 $Q_0 \trans K$ is also an orthogonal subspace.  By choosing
 $Q_0 \in  K$ we can always assume that $K$ contains the
 identity matrix $I_m$.
 
The maximal size of an orthogonal subspace is given by the Radon-Hurwitz number, defined as follows.  For $m\in\N$,  let $m=2^b \cdot a$, with $a$ odd, and let $b=4c+d$ where $c$ is a nonnegative integer and $d\in\{0,1,2,3\}$.  Then Radon Hurwitz number of $m$ is $\rho(m):=2^d +8c$.
 
 \begin{theorem}\label{radhur}  Let $K\subseteq \bR^{m\times m}$ be
 an orthogonal subspace.  Then $k :=\dim K\le \rho(m)$, and this
 inequality is sharp for any $m\in\N$.  More precisely, assume
 that $I_m\in K$ and $k \ge 2$.  Then $K$ has a basis
 $I_m,Q_1,\ldots,Q_{k-1}$ where $Q_1,\ldots,Q_{k-1}$ is a set
 of skew symmetric orthogonal anticommuting matrices,
 i.e. $Q_iQ_j=-Q_jQ_i$ for any $1\le i <j\le k-1$.

 Conversely, if $Q_1,\ldots, Q_{k-1}\in \bR^{m\times m}$ are
 $k -1$ skew symmetric orthogonal anticommuting matrices
 then $\spn(I_m,Q_1,\ldots,Q_{k-1})$ is an $k$-dimensional
 orthogonal subspace.

 \end{theorem}

If $Q\in \bR^{m\times m}$ is an
 orthogonal matrix, then all $m$ singular values of $Q$ are equal to
 $1$. Let $Q_i\in \bR^{m_i\times m_i}$ be an orthogonal matrix for
 $i=1,2$.  Then for any real $a_1,a_2$, the singular values of 
 $a_iQ_i$ are $|a_i|$, and the singular values of $(a_1Q_1)\otimes (a_2Q_2)$ are all $|a_1a_2|$.

 Suppose furthermore that $m_1,m_2$ are even and $Q_1, Q_2$ are skew
 symmetric orthogonal matrices.  Then $a_iQ_i$ has $\frac{m_i}{2}$
 eigenvalues equal to $a_i\sqrt{-1}$ and $-a_i\sqrt{-1}$ for $i=1,2$ repectively.
 Furthermore, $(a_1Q)\otimes (a_2Q)$ is a real symmetric
 matrix with $\frac{m_1m_2}{2}$ eigenvalues equal to $a_1a_2$ and
 $\frac{m_1m_2}{2}$ eigenvalues equal to $-a_1 a_2$.

 \begin{theorem}  Let $K\subseteq \bR^{m\times m}$ be an orthogonal
 subspace.  Then $\rH(K)=\log m$.

 Suppose furthermore that $m_1,m_2$ are even and $K_i\subset
 \bR^{m_i\times m_i}$ are orthogonal subspaces of dimension two
 at least for $i=1,2$.  Then
 \begin{align}
 \rH(K_1 \otimes K_2)\le & \log
 \frac{m_1m_2}{2}=\log(m_1m_2)-\log
 2\nonumber\\
 &=\rH(K_1)+\rH(K_2)-\log 2\label{HL12est}
 \end{align}
 In particular, the additivity conjecture does not hold for real subspaces of matrices.

 \end{theorem}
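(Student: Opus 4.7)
The plan is to handle the two parts separately, and in both cases the computation reduces to identifying the spectrum of an explicit nonnegative matrix.

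For the first assertion, I would take any $x\in K$ with $\Tr(xx\trans)=1$. By definition of an orthogonal subspace, $x=aQ$ for some scalar $a\in\bR$ and some orthogonal $Q\in\bR^{m\times m}$, so $xx\trans=a^2 QQ\trans=a^2 I_m$. The trace condition forces $a^2 m=1$, hence $xx\trans=\tfrac{1}{m}I_m$, whose eigenvalues are all $1/m$. Thus $\rH(x)=-m\cdot\tfrac{1}{m}\log\tfrac{1}{m}=\log m$ for every admissible $x$, and in particular $\rH(K)=\log m$.

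For the second assertion, I would use Theorem~\ref{radhur} to exploit the dimension-two hypothesis: after replacing $K_i$ by $Q_0\trans K_i$ we may assume $I_{m_i}\in K_i$, and then there exists a skew-symmetric orthogonal matrix $Q_i\in K_i$. Both $I_{m_1m_2}=I_{m_1}\otimes I_{m_2}$ and $Q_1\otimes Q_2$ lie in $K_1\otimes K_2$, so the element
\[
x:=c\,(I_{m_1m_2}+Q_1\otimes Q_2),\qquad c\in\bR,
\]
belongs to $K_1\otimes K_2$. The key algebraic observations are that $Q_i\trans=-Q_i$ combined with $Q_iQ_i\trans=I$ gives $Q_i^2=-I$, hence $(Q_1\otimes Q_2)^2=I_{m_1m_2}$; and that $(Q_1\otimes Q_2)\trans=Q_1\otimes Q_2$ (the two minus signs cancel), so $x$ is symmetric. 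Expanding then gives $xx\trans=2c^2(I_{m_1m_2}+Q_1\otimes Q_2)$, whose spectrum consists of the eigenvalue $0$ with multiplicity $m_1m_2/2$ and the eigenvalue $4c^2$ with multiplicity $m_1m_2/2$.

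Choosing $c$ so that $\Tr(xx\trans)=1$ gives $c^2=1/(2m_1m_2)$, so each nonzero eigenvalue of $xx\trans$ equals $2/(m_1m_2)$. Therefore
\[
\rH(x)=-\frac{m_1m_2}{2}\cdot\frac{2}{m_1m_2}\log\frac{2}{m_1m_2}=\log\frac{m_1m_2}{2},
\]
and since $\rH(K_1\otimes K_2)\le\rH(x)$ the inequality \eqref{HL12est} follows, with the right-hand side equal to $\rH(K_1)+\rH(K_2)-\log 2$ by the first part of the theorem. There is no real obstacle to this argument; the only subtlety is to verify that the skew-symmetry of the $Q_i$ both forces $(Q_1\otimes Q_2)^2=I$ and keeps $Q_1\otimes Q_2$ symmetric, which is precisely what produces a rank-deficient $xx\trans$ and drives the entropy strictly below the product value.
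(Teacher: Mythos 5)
Your proof is correct and follows essentially the same route as the paper: in both, the first assertion follows because every normalized $x\in K$ satisfies $xx\trans=\tfrac{1}{m}I_m$, and the second follows by exhibiting the element $c(I_{m_1m_2}+Q_1\otimes Q_2)$ of $K_1\otimes K_2$, whose skew-symmetric orthogonal factors make $Q_1\otimes Q_2$ symmetric with spectrum $\{\pm 1\}$ and hence make half the eigenvalues of $xx\trans$ vanish. The only difference is that you compute the normalizing constant correctly as $c=1/\sqrt{2m_1m_2}$, whereas the paper's printed coefficient $(2/(m_1m_2))^{1/2}$ would give $\Tr(xx\trans)=4$; that is a harmless slip in the paper, as the resulting entropy is still $\log(m_1m_2/2)$ either way.
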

 \proof  Since any matrix $x \in K$ is of the form $aQ$ for some
 orthogonal $Q$ it follows that if $\Tr(xx \trans)=1$ then the singular values of $x$ are all equal to $\frac{1}{m}$.  Hence $\rH(x)=\log m$ and $\rH(K)=\log m$.

 Assume now that $K_1,K_2$ are orthogonal spaces of
 dimension two at least.  Without loss of generality we may
 assume that $I_{m_1},Q_1\in \bR^{m_1\times m_1}$ and $I_{m_2},Q_2\in \bR^{m_2\times m_2}$, where $Q_1,Q_2$ are orthogonal.  Hence
 $I_{m_1 m_2}=I_{m_1}\otimes I_{m_2}$ and $Q_1\otimes Q_2$ are both in $K_1\otimes K_2$.  Recall that $Q_1\otimes Q_2$ is a symmetric matrix which has $\frac{m_1m_2}{2}$ eigenvalues equal to $1$ and $-1$ respectively.  Hence
 $Q_1\otimes Q_2+I_{m_1m_2}$ is a nonnegative definite real symmetric
 matrices which has $\frac{m_1m_2}{2}$
 eigenvalues equal to $2$ and $0$ respectively.
 Let $x=(\frac{2}{m_1m_2})^{\frac{1}{2}}(Q_1\otimes Q_2+I_{m_1m_2})$.  Then $\Tr(xx \trans)=1$
 and $x$ has $\frac{m_1m_2}{2}$ nonzero singular values all equal
 to $(\frac{2}{m_1m_2})^{\frac{1}{2}}$.  Hence
$\rH(K_1\otimes K_2) \le \rH(x)=\log(\frac{m_1m_2}{2})$.
 \qed
\\

\emph{Acknowledgments:---}
We appreciate many valuable discussions with Jon Yard. 
GG and AR research was supported by NSERC, PIMS, and iCORE.

\end{document}